\newcommand{\be}{\begin{equation}}
	\newcommand{\ee}{\end{equation}}
\newcommand{\ba}{\begin{eqnarray}}
	\newcommand{\ea}{\end{eqnarray}}
\newtheorem{observation}{Observation}
\newtheorem{theorem}{Theorem}
\newtheorem{lemma}[theorem]{Lemma}
\begin{document}
	
%\preprint{APS/123-QED}
\title{Bounds on concatenated entanglement-assisted quantum error-correcting codes}
\author{Nihar Ranjan Dash\textsuperscript{}}
   \email{dash.1@iitj.ac.in}
   \affiliation{Indian Institute of Technology Jodhpur, Karwar, Jheepasani, Rajasthan 342030, India}
\author{Sanjoy Dutta}
    \affiliation{Poornaprajna Institute of Scientific Research (PPISR), Bidalur post, Devanahalli, Bengaluru 562164, India}
\author{R. Srikanth\textsuperscript{}}
   \email{srik@ppisr.res.in}
   \affiliation{Poornaprajna Institute of Scientific Research, Bidalur Post, Devanahalli, Bengaluru 562164, India}
\author{Subhashish Banerjee}
   \email{subhashish@iitj.ac.in}
   \affiliation{Indian Institute of Technology Jodhpur, Rajasthan 342030, India}

	%\textsuperscript{b,c}}
%\affiliation{
	%Third institution, the second for Charlie Author
	%}%
%\author{Delta Author}
%\affiliation{%
	% Authors' institution and/or address\\
	% This line break forced with \textbackslash\textbackslash
	%}%

%\collaboration{CLEO Collaboration}%\noaffiliation

%\date{}% It is always \today, today,
%  but any date may be explicitly specified

\begin{abstract}
Code concatenation combines two or more component codes to design larger codes with greater noise resilience. Introducing entanglement assistance to concatenated codes provides a further advantage in terms of improved error rates and beating certain bounds on codes that would otherwise be unbeatable. First, we derive the general expression for the shared entanglement of a concatenated code and show that the number of ebits can depend on the order of concatenating the component entanglement-assisted quantum error-correcting codes (EAQECCs). We further construct families of pairs of EAQECCs such that the number of ebits of the resultant of concatenating the two codes in a given pair is order independent. Second, we derive conditions on code distance under which non-maximal-entanglement EAQECCs obtained from a classical quaternary Griesmer or Plotkin code saturate the entanglement-assisted (EA) Griesmer or linear EA Plotkin bound, respectively, extending the known result for maximal-entanglement EAQECCs. Furthermore, we present several families of such nonmaximal-entanglement EAQECCs. Third, we derive an EA version of the quantum Griesmer-Rains bound on the number of correctable errors for EAQECCs. Finally, we present families of pairs of EAQECCs such that the violation of the EA Hamming bound by the resultant of concatenating the two codes in a given pair is order dependent.
 \end{abstract}
%\keywords{Keywords..}
%Use showkeys class option if keyword
%display desired
\maketitle

%\tableofcontents

\section{Introduction}\label{sec-intro}

Quantum error-correcting codes (QECCs) constitute an important tool for realizing fault-tolerant quantum computation (FTQC), secure quantum communication \cite{preskill2018quantum, deutsch2020harnessing} and also noise characterization \cite{omkar2015characterization}. Subject to orthogonality constraints \cite{calderbank1998quantum}, QECCs can be constructed from classical codes by means of the stabilizer formalism. Entanglement-assisted (EA) quantum error-correcting codes (EAQECCs), which utilize preshared entanglement between the sender and the receiver \cite{brun2006correcting, wilde2008optimal, lai2013dualities, lai2013entanglement, brun2014catalytic} and local encoding implemented by the sender \cite{Kuo2019encoding, wilde2008thesis, wilde2010convolutional, wilde2010convolutionalpra}, provide a generalization of QECCs that overcome the orthogonality constraint and thereby simplify quantum code construction. 

Entanglement-assisted QECCs can enhance the information capacity \cite{bowen2002entanglement, brun2014catalytic} of quantum channels and furthermore violate the nondegenerate quantum EA Hamming bound \cite{li2014entanglement} and the quantum EA Singleton bound \cite{grassl2021entanglementassisted}. The latter violation motivates the need for an EA Singleton bound appropriate for large-distance degenerate codes \cite{grassl2022entropic}, complementing the EA Singleton bound obtained by direct application of the standard Singleton bound to all of the shared qubits \cite{brun2006correcting}. Analogous to the Plotkin bound for classical codes and Griesmer bound for classical and quantum codes, we have for EAQECCs the linear EA Plotkin bound \cite{guo2013linear, lai2013dualities} and EA Griesmer bound \cite{li2015entanglement}. 

It is appropriate to quantify the net key rate of an EAQECC as the number of logical qubits transmitted minus that of preshared ebits. Thus it is practically advantageous to keep the shared entanglement low \cite{hsieh2009entanglement}, leading to the concept of catalytic EAQECCs \cite{brun2014catalytic}, which is relevant for FTQC. In the other extreme, the number of ebits is maximal when it equals the code length of the given EAQECC minus the logical key rate \cite{lai2013dualities}. From an experimental perspective, an all-optical implementation of EAQECCs has been proposed \cite{djordjevic2010photonic}, including the photonic encoder and decoder architectures required for the same.

Entanglement-assisted QECCs have been generalized in various directions; for example, a continuous-variable EAQECC and its practical implementation using linear optics were proposed in \cite{wilde2007optics} and certain works have studied relaxing the original assumption of noiselessness of the ebits \cite{lai2012entanglement-assisted, wang2014quantum}. Moreover, shorter EAQECCs can be combined to produce longer concatenated EAQECCs (CEAQECCs) \cite{fan2022entanglementassisted}. In classical coding, concatenated codes were introduced by Forney to achieve large coding yields with less encoding as well as decoding complexity \cite{forney1966concatenated}. This was subsequently extended to conventional concatenated (stabilizer) codes in the quantum case by various authors \cite{gottesman1997stabilizer, knill1996concatenated, gaitan2008quantum}. A classical generalization of Forney's scheme was proposed by Blokh and Zyablov \cite{blokh1974coding} using nested inner subcodes and multiple outer codes, later reformulated by Zinov'ev \cite{zinov1976generalized} as a hierarchical cascade, i.e., the generalized concatenated code framework, subsequently extended to the quantum domain by Grassl \textit{et al.} \cite{grassl2009generalized}. Concatenated EAQECCs bring together the benefits of shared entanglement and concatenation leading to codes that outperform traditional QECCs and EAQECCs \cite{fan2022entanglementassisted}.
Entanglement-assisted QECCs have been combined with passive codes \cite{lidar2001decoherenceI, lidar2001decoherenceII, dash2023concatenating}, leading to subsystem EAQECCs \cite{hsieh2007general}. 

The phenomenon of degeneracy of errors is a distinguishing feature of QECCs absent in classical codes and as such inherited by EAQECCs \cite{li2014entanglement}. It arises naturally in the context of noiseless subspaces \cite{dash2023concatenating} as well as code concatenation of possibly nondegenerate component codes \cite{fan2021degenerate_conc}. In the latter case, the degeneracy arises as the distance of the concatenated code is lower bounded by the product of those of the constituent codes. 

In this article we investigate the different bounds arising for CEAQECCs and how these bounds relate to the properties of the component EAQECCs. \color{black} After presenting preliminaries (Sec. \ref{sec-prelim}), we begin by investigating the saturation of various bounds by EAQECCs derived from classical quaternary codes or from standard stabilizer codes (Sec. \ref{sec-saturation_bounds_eaqeccs}). Specifically, we derive general conditions on code distance such that nonmaximal-entanglement EAQECCs, obtained from a classical quaternary Griesmer or Plotkin code, saturate the EA Griesmer or linear EA Plotkin bound, respectively. Further, for EAQECCs $[[n,k,d;c]]_{q}$ ($d\ge q^2$) we derive an EA Griesmer-Rains bound on the number of correctable errors for EAQECCs. 

These results form the basis for the subsequent sections, where we explore how these properties are propagated under concatenation of EAQECCs. However, prior to that (taken up in Sec. \ref{sec-violation_or_saturation_ceaqeccs}) we need to determine the implications for the resultant concatenated code arising from whether or not the inner code rate divides the outer code length (addressed in Sec. \ref{sec-eacqeccs}) and then to determine how the order of concatenation affects the performance of CEAQECCs in terms of parameters such as the number of shared ebits, the logical error probability and the pseudothreshold (addressed in Sec.  \ref{sec-optimizing_order_less_ebits}). \color{black} Specifically, in the latter section, we identify families of pairs of EAQECCs such that concatenating the two codes in a given pair entails the same number of ebits, independently of the order of concatenation, e.g., the CEAQECCs obtained by combining an $[[n+1,k;c+1]]$ code with an $[[n,k-1;c+1]]$ code or vice versa, and likewise the CEAQECCs obtained by concatenating any two members of the family of $[[n,1;n-1]]$ EAQECCs ($n \in \mathbb{Z}^+, Z\ge3$).  

\color{black} In Sec. \ref{sec-violation_or_saturation_ceaqeccs}, we consider certain aspects of propagation of various bound-saturation or bound-violation properties in CEAQECCs. For example, \textcolor{black}{we show that the EA Hamming bound (EAHB) satisfying codes $[[n, 1, n; n-1]]$ (odd $n$) and $[[n, 1, n-1; n-1]]$ (even $n$), concatenated with the EAHB-violating (degenerate) $[[8, 1, 5; 1]]$ code, will result in CEAQECCs that violate the EAHB for sufficiently large $n$ if and only if the former serves as the outer code (Sec. \ref{sec-violation_or_saturation_ceaqeccs}).}  We indicate certain families of EAQECCs for which the saturation of various bounds is closed under concatenation. \color{black}
Finally, we present our conclusions and related discussion in Sec. \ref{sec-conclusion}, indicating a number of open problems emerging from our work.

\section{Preliminaries}\label{sec-prelim}
\subsection{QECCs}
Let $\mathcal{S} \in \mathcal{P}_{n}$ be an Abelian subgroup and $-I \not\in \mathcal{S}$, with $n-k$ independent generators \cite{lidarandbrun2013qecbook}. The corresponding code $\mathcal{C(S)}$ of dimension $2^k$ is the subspace fixed by the elements of $\mathcal{S}$. The code $\mathcal{C(S)}$ constitutes an $[[n,k,d]]$ quantum stabilizer code \cite{gottesman1997stabilizer} which encodes $k$ logical qubits into $n$ physical qubits with code distance $d$. A set of errors $\{E_i\}$ is correctable by the code $\mathcal{C(S)}$ if and only if $E_i^{\dagger} E_j \not\in \mathcal{N(S)} - \mathcal{S}$ ($i\ne j$), where $\mathcal{N(S)}$ is the normalizer of the stabilizer $\mathcal{S}$ \cite{lidarandbrun2013qecbook, nielsen2010quantum}. Any element in $\mathcal{N(S)} - \mathcal{S}$ has a minimum weight equal to the minimum distance of code $\mathcal{C(S)}$.  A code is degenerate if $E_iE_j \in \mathcal{S}$ or equivalently if there is any element in $\mathcal{S}$ with weight less than the code distance $d$. 

\subsection{EAQECCs}
Entanglement-assisted QECCs are QECCs that take advantage of the preshared entanglement \cite{brun2006correcting, brun2014catalytic}. Other frameworks that introduce entanglement assistance for error correction include entanglement-assisted operator quantum error correction \cite{hsieh2007general} and entanglement-assisted classical enhancement of quantum error correction \cite{kremsky2008classical}. A unified approach to these frameworks was recently proposed in Ref. \cite{nadkarni2024unified}. Somewhat different ideas, namely entanglement-assisted implementation of fault-tolerant encoder and decoder architectures for qudit stabilizer codes, were discussed in Ref. \cite{sharma2024fault}.

Here our focus is on the EAQECC framework. An EAQECC is represented by a $[[n,k,d;c]]$ code, which encodes $k$ logical qubits into $n$ physical qubits prepared using $c$ qubits that are halves of as many ebits. Here $d$ is the minimum distance of the code. The logical state is given by \cite{lai2012entanglement-assisted}
\begin{align}
    \ket{\psi_{L}}=(U_{A} \otimes I_{B})\big[\ket{\psi} \otimes \ket{0}^{\otimes (n-k-c)} \otimes (\ket{\phi^{+}}_{AB}^{\otimes c})\big],
    \label{eq: logicalstateeaqecc}
\end{align}
where $\ket{\phi^{+}}_{AB}$ represents the maximally entangled state preshared between the sender Alice and the receiver Bob. Alice uses a unitary operator $U$ to encode the state $\ket{\psi}$ along with the ancilla and her part of the ebits. A maximal-entanglement EAQECC is one for which $c=n-k$ in Eq. (\ref{eq: logicalstateeaqecc}), and hence no ancillas are required  \cite{lai2013dualities}. The schematic diagram of EAQECC is depicted in Fig. \ref{fig:schematiceaqecc}.
\begin{figure}[t]
    \centering
    \includegraphics[width=8.5cm]{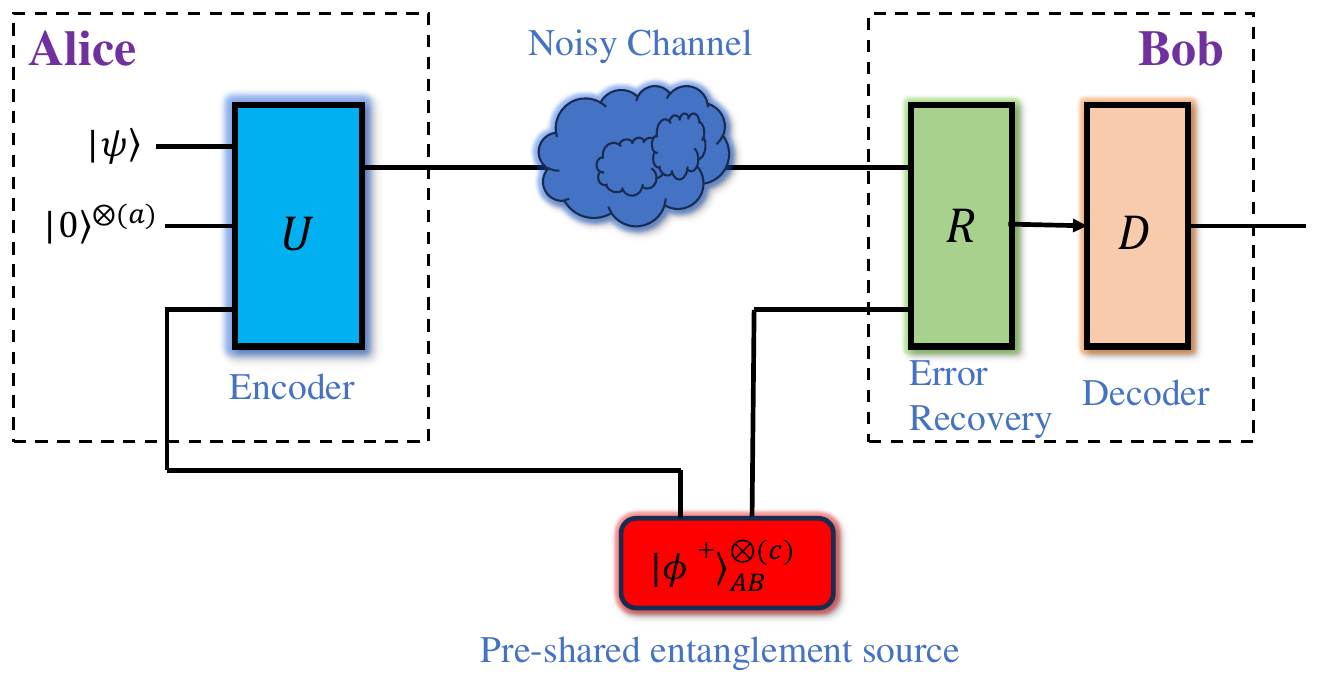}
    \caption{Schematic diagram of the operation of an EAQECC.  Entanglement is pre-shared between Alice and Bob. Alice encodes the $k$-qubit state $\ket{\psi}$ using the $a \equiv n-k-c$ ancilla and her part of ebits, with the help of the encoder $U$ [Eq. (\ref{eq: logicalstateeaqecc})], and transmits the $n$ qubits via a noisy quantum channel. Bob measures the qubits that he received over the channel, together with his half of the ebits to determine the error syndromes, by which he recovers $\ket{\psi}$.}
    \label{fig:schematiceaqecc}
\end{figure} 
 
Let $\mathcal{S}^{\prime} \in \mathcal{P}_{n}$ be a non-Abelian subgroup with $2^{2c+a}$ distinct elements up to an overall phase. Then there exists a set of generators $\{\Bar{Z}_1,\dots,\Bar{Z}_{c+a},\Bar{X}_{1},\dots,\Bar{X}_{c}\}$ for $\mathcal{S}^{\prime}$, satisfying the following commutation relations \cite{lidarandbrun2013qecbook, lai2013entanglement}:
\begin{subequations}
    \begin{align}
        [\Bar{Z}_i,\Bar{Z}_j]&=0  \forall i,j,\\
        [\Bar{X}_i,\Bar{X}_j]&=0  \forall i,j,\\
        [\Bar{X}_i,\Bar{Z}_j]&=0  \forall i\neq j,\\
        \{\Bar{X}_i,\Bar{Z}_i\}&=0  \forall i=j.
    \end{align}
\end{subequations}
Here $\mathcal{S}_{S}^{\prime}=\langle \Bar{Z}_1,\dots,\Bar{Z}_{c},\Bar{X}_{1},\dots,\Bar{X}_{c} \rangle$ denotes the symplectic subgroup and $\mathcal{S}_{I}^{\prime}=\langle \Bar{Z}_{c+1},\dots,\Bar{Z}_{c+a} \rangle$ the isotropic subgroup. Thus $\mathcal{S}^{\prime}$ is generated by $\mathcal{S}_{I}^{'}$ and $\mathcal{S}_{S}^{\prime}$. The number of anticommuting pairs in $\mathcal{S}_{S}^{\prime}$, namely, $c$, is the number of ebits, while the number of (commuting) elements in $\mathcal{S}_{I}^{\prime}$ gives the number of ancillas, namely $a$. 

A set of errors $\{E_i\}$ is correctable by the code $\mathcal{C}(\mathcal{S}^{\prime})$ if and only if $E_i^{\dagger} E_j \not\in \mathcal{N}(\mathcal{S}^{\prime}) - \mathcal{S}_{I}^{\prime}$, where $\mathcal{N}(\mathcal{S}^{\prime})$ is the normalizer of the group $\mathcal{S}^{\prime}$. Any element in $\mathcal{N}(\mathcal{S}^{\prime}) - \mathcal{S}_{I}^{\prime}$ has a minimum weight equal to the minimum distance of code $\mathcal{C}(\mathcal{S}^{\prime})$ \cite{brun2006correcting}.
The code rate $r$ and the net rate $r_n$ for an $[[n,k,d;c]]$ code are given by $r=\frac{k}{n}$ and $r_n=\frac{k-c}{n}$, respectively. We can consider $r_e \equiv \frac{c}{n}$ as the entanglement-assistance rate. 

\subsection{Bounds for EAQECCs}

A nondegenerate $[[n,k,d;c]]$ code satisfies the EA Singleton bound \cite{brun2006correcting}
\begin{equation}
       k\leq c+n-2d+2.
       \label{eq:singleton}
\end{equation}
In the degenerate case, Eq. (\ref{eq:singleton}) holds only if the no-cloning bound of $d\leq\frac{n}{2}+1$ holds
\cite{lai2018linear}; otherwise, i.e., if $d\ge\frac{n}{2}+1$, the EA Singleton bound is given by \cite{grassl2022entropic}
    \begin{align}
    k\leq \frac{(n-d+1)(c+2d-2-n)}{3d-3-n}.
    \label{eq:qsb_d_geq}
\end{align}
Note that setting $d=\frac{n}{2}+1$, both Eqs. (\ref{eq:singleton}) and (\ref{eq:qsb_d_geq}) imply $k\le c$.
It is worth noting that code parameters by themselves do not necessarily determine whether the code is degenerate \cite{lai2013entanglement}. 
%\textcolor{green}{\textbf{For example, EAQECCs such as $[[7,1,5; 3]]$, $[[7,1,5; 4]]$, $[[7,1,5; 5]]$, and $[[9,1,7;7]]$ can be either degenerate or non-degenerate, with explicit constructions given in \cite{lai2013entanglement} by optimization over $[[7,1,3]]$ quantum BCH (Bose-Chaudhuri-Hocquenghem) code in the case of the first three, and Shor's nine-qubit code, in the last. Notice that these codes are with $d>\frac{n}{2}+1$.}} 
Examples of degenerate EAQECCs \cite{li2015entanglement} with $d<\frac{n}{2}+1$ and $d>\frac{n}{2}+1$ are $[[10,2,5;2]]$ and $[[19,2,12;11]]$, respectively. The EAQECCs saturating the bounds in Eq. (\ref{eq:singleton}) or (\ref{eq:qsb_d_geq}) are EA maximum distance separable (MDS) codes. An MDS code saturates the Singleton bound and thus offers the maximum possible distance for a given code length, rate, and alphabet size.  All our results in Sec. \ref{sec-saturation_bounds_eaqeccs} obey the EA Singleton bound (\ref{eq:singleton}), whereas the EA Singleton bound (\ref{eq:qsb_d_geq}) is applicable for some of our CEAQECCs in Sec. \ref{sec-violation_or_saturation_ceaqeccs}. 

For a nondegenerate $[[n,k,d;c]]$ code the EAHB \cite{lai2018linear, fan2022entanglementassisted}
\begin{align}
     \sum_{i=0}^{t} 3^{i}\binom{n}{i} \leq 2^{n-k+c},
     \label{eq:eahb}
\end{align}
where the code can correct $t=\bigl \lfloor\frac{d-1}{2}\bigr\rfloor$ errors. Based on Eq. (\ref{eq:eahb}), the EA-Hamming efficiency \cite{dash2023concatenating} can be defined as
\begin{align}
    \varphi \equiv
    \left(\frac{\log_2\left[\sum_{i=0}^{t} 3^{i}\binom{n}{i}\right]}{n-k+c}\right).
    \label{eq:hammingefficiency}
\end{align}
As degenerate EAQECCs and CEAQECCs may violate the EAHB, we use $\varphi$ as a performance metric to quantify the degree of this violation.

A tighter version of the Singleton bound, taking into consideration the dimension $q$ of the alphabet, is provided by the Griesmer bound \cite{Griesmer1960, SOLOMON1965170, macwilliams1978theory, huffman2003fundamentals}, which is suitable to judge the optimality of QECCs constructed from classical codes \cite{li2015entanglement}. For a classical $[n,k,d]_{q}$ code with $k\ge 1$, the Griesmer bound is given by \cite{macwilliams1978theory, li2015entanglement}
\begin{align}
n \geq \sum_{i=0}^{k-1}\biggl \lceil \frac{d}{q^{i}} \biggr \rceil,
 \label{eq:classicalgriesmerbound}
\end{align}
where $\lceil \cdot \rceil$ is the ceiling function. We refer to a code saturating the Griesmer bound as a Griesmer code.

\begin{comment}
\textcolor{green}{\textbf{The Griesmer bound for $[[n,k,d]]_{q}$ linear QECCs is derived from that for the classical $[\frac{n+k}{2},k]_{q^2}$ as the minimum distance of the quantum $q$-ary code is at most the minimum distance of the classical $q^{2}$-ary code \cite{Ashikhmin1999upper, sarvepalli2010}. Thus for $[[n,k,d]]_{q}$ linear QECCs, the Griesmer bound is given by \cite{li2015entanglement}
\begin{align}
    \frac{n+k}{2} \geq \sum_{i=0}^{k-1}\Biggl \lceil \frac{d}{{q^{2i}}} \Biggr \rceil,
    \label{eq:quantumgriesmer}
    % Eq. (8)
\end{align}}}
\end{comment}
For any $[[n,k,d;c]]_{q}$ code derived from classical $q^{2}$-ary code, the EA Griesmer bound is \cite{li2015entanglement,luo2023constructing}
\begin{align}
    \frac{n+c+k}{2} \geq \sum_{i=0}^{k-1}\Biggl \lceil \frac{d}{q^{2i}} \Biggr \rceil,
    \label{eq:eagriesmer}
\end{align}
which generalizes the corresponding non-EA result \cite{Ashikhmin1999upper, sarvepalli2010}. For the case of the maximal-entanglement binary EAQECC, setting $c {:=} n-k$ in Eq. (\ref{eq:eagriesmer}), we obtain the classical bound (\ref{eq:classicalgriesmerbound}) for the quaternary ($q=4$) code. By expanding the summation in Eq. (\ref{eq:eagriesmer}), we find that the right-hand side (rhs) here is greater than or equal to $d + k-1$. Referring to Eq. (\ref{eq:singleton}), it follows that the EA Griesmer bound implies the EA Singleton bound, paralleling a similar result for the quantum Griesmer for Calderbank-Shor-Steane (CSS) codes and quantum Singleton bounds \cite{sarvepalli2010}. However, saturation of the EA Griesmer bound does not imply EA MDS codes. An example here would be $[[12,2,6;2]]$ \cite{li2015entanglement}. A code is called optimal in the Griesmer sense if $d$ attains the maximum allowed by the bound, for the given parameters $n$ and $k$.

The Plotkin bound \cite{Plotkin1960} is tighter than the Singleton bound \cite{guo2013linear} and sometimes also tighter than the Hamming bound for large $d$, even approaching $n$ \cite{huffman2003fundamentals}. For any classical $[n,k,d]_q$ code such that $d >(1-\frac{1}{q})n$, the Plotkin bound is given by \cite{huffman2003fundamentals, macwilliams1978theory} 
    \begin{align}
    \frac{(q-1)n\times q^{k}}{q(q^{k}-1)} \ge d.
    \label{eq:classicalplotkinboundqary}
\end{align} 
We refer to a code saturating the Plotkin bound as a Plotkin code.
%\textcolor{green}{\textbf{Thus, the Plotkin bound for a classical $[n,k,d]_4$ code is \begin{align}
%    \frac{3n\times 4^{k}}{4(4^{k}-1)} \ge d.
%    \label{eq:classicalplotkinbound}
    % Eq 11
%\end{align}}}
\begin{comment}
\textcolor{green}{\textbf{Ref. \cite{guo2013linear} provides the linear EA-Plotkin bound for linear EAQECC $[[n,k,d;c]]$ derived from a classical quarternary code:
\begin{align}
    \frac{3\times 4^{k}}{8(4^{k}-1)} \left(n+c+k\right) \geq d.
    % Eq (12)
    \label{eq:plotkinboundlinear}
\end{align}}}
\end{comment}

The general linear EA Plotkin bound for linear EAQECC $[[n,k,d;c]]_{q}$ derived from a classical $q^{2}$-ary code can be generalized as \cite{guo2013linear}
\begin{align}
    \frac{(q^{2}-1)\times q^{2k}}{2q^{2}(q^{2k}-1)} \left(n+c+k\right) \geq d.
    \label{eq:plotkinboundlinearqary}
\end{align}
Here we obtain the bound for the maximal-entanglement EAQECC \cite{lai2013dualities}, setting $c {:=} n-k$ in Eq. (\ref{eq:plotkinboundlinearqary}).

If we remove the ceiling function from the right-hand side of the EA Griesmer bound, the result is a geometric progression for finite terms, which gives $d\left(\frac{1-q^{-2k}}{1-q^{-2}}\right) = d\frac{q^2}{q^2-1}\frac{q^{2k}-1}{q^{2k}}$. Inserting this in Eq. (\ref{eq:eagriesmer}), we obtain Eq. (\ref{eq:plotkinboundlinearqary}). Thus, the EA-Griesmer bound implies the linear EA-Plotkin bound. The EA-Singleton and linear EA-Plotkin bounds are in general incomparable and both are weaker than the EA-Griesmer bound. However, for $k=1$, EA Griesmer bound matches the EA Singleton bound Eq. (\ref{eq:singleton}) and linear EA Plotkin bounds. Thus, the codes violating the EA Singleton bound Eq. (\ref{eq:singleton}) presented in \cite{grassl2021entanglementassisted} also violate the linear EA Plotkin and EA Griesmer bounds. For example the optimal $[[9,1,6;1]]$ code \cite{grassl2021entanglementassisted} with $d \geq \frac{n}{2}+1$ violates this triad. Note that for $k \geq 2$, it is stronger than both. Some of the EAQECCs given in \cite{li2015entanglement} saturate the EA Griesmer bound, while Ref. \cite{guo2013linear} constructs EAQECCs that saturate the linear EA Plotkin bound. It is natural to expect that CEAQECCs obtained from concatenating components characterized $k=1$ should also meet the triad of EA Singleton, EA Griesmer and linear EA Plotkin bounds simultaneously. We show that this is indeed the case in Sec. \ref{sec-saturation_bounds_eaqeccs}.

\subsection{Generalized concatenated quantum codes}
Historically, the idea of concatenated codes dates back to Forney \cite{forney1966concatenated}, whose scheme formed the basis of conventional concatenated (stabilizer) codes in the quantum case \cite{gottesman1997stabilizer, knill1996concatenated, gaitan2008quantum}. The direct quantum counterpart of the Forney code was introduced in Refs. \cite{grassl2009generalized, gottesman2024surviving}, where an \(((n_o,K_o,d_o))_q\) outer code over a \(q\)-ary alphabet was concatenated with an \(((n_i,q,d_i))_r\) inner code by replacing each outer symbol by one of \(q\) inner codewords, yielding an overall code of parameters \(((n_on_i, K_o))_r\). Here we use the notation (applicable to stabilizer and nonadditive codes \cite{sarvepalli2010}) whereby $((n, K))_{q}$ is equivalent to $[[n, \log_q K]]_q$. Blokh and Zyablov \cite{blokh1974coding} generalized classical Forney concatenation to multilevel concatenation by partitioning inner symbols in stages and applying multiple outer codes; Zinov'ev \cite{zinov1976generalized} introduced a so-called cascade perspective, emphasizing partitions of the inner symbol set handled by an outer code.

Extending the above ideas to the quantum domain, Grassl \textit{et al.} \cite{grassl2009generalized} established the generalized concatenated quantum code (GCQC) framework. In their construction, the Hilbert space of $n_i$ qudits for each inner block is organized via a partition of the computational basis into $q$ sets, each of which, after stabilizer projection, defines a subcode $\mathcal{Q}_j$ with parameters $((n_i,\varsigma,\delta))_r$, where $\varsigma$ and $\delta$ denote the protected dimension and distance of the subcode, respectively. The outer \(q\)-ary code determines which subcode is used for each block, and logical information is encoded with that subcode. The composite quantum code has the parameter form
\(
\bigl(\bigl(n_on_i,\; K_o  \varsigma^{n_o},\; \Delta \bigr)\bigr)_r,
\)
for some distance $\Delta$, and is typically nonadditive. This construction reduces to the Forney concatenation in the special case \(\varsigma=1\) (where each outer code symbol maps to an inner code vector rather than subspace) yet retains the multilevel partitioning spirit inherited from Blokh and Zyablov and from Zinov'ev. The assumption that each subcode $\mathcal{Q}_j$ is of identical dimension $\varsigma$ is not essential to the GCQC construction but provides a simplification that allows derivation of properties such as minimum distance of the code \cite{grassl2009generalized} and the reduction to the Forney code. In light of the GCQC approach, we find that there can be more than one way to realize concatenated codes, and our present work focuses on a particular approach, namely the framework of conventional concatenated quantum codes \cite{gottesman1997stabilizer, gaitan2008quantum}. In Sec. \ref{sec-eacqeccs} we point out that Forney codes can be encompassed in the conventional framework, but the generalization due to Grassl \textit{et al.} \cite{grassl2009generalized} goes beyond it. 

%\textcolor{blue}{It is worth noting that generalized concatenated codes (GCCs), as introduced by Grassl \emph{et al.} \cite{grassl2009generalized}, do not encompass all forms of concatenation used in quantum error correction. In GCC, each outer code symbol is associated with a single block of the inner code, which is partitioned into $q$ orthogonal subspaces of equal dimension $\delta$. The outer symbol selects the subspace, while quantum information is carried within it. In the special case $\delta=1$, this reduces to Forney's concatenation scheme where each outer symbol maps to a single inner codeword. By contrast, conventional concatenated quantum codes allow inner codes with rate $k_i>1$ \cite{gottesman1997stabilizer, gaitan2008quantum}, so that one inner block corresponds to a \emph{block of outer symbols}. Such blockwise concatenation lies outside the strict GCC framework, making GCC and conventional concatenation related but distinct paradigms.}
\color{black}

\section{EAQECCs saturating EA Singleton, EA Griesmer and linear EA Plotkin bounds}\label{sec-saturation_bounds_eaqeccs}
As pointed out in \cite{lai2012entanglement-assisted}, it is possible to convert any standard $[[n,k,d]]$ stabilizer code into an $[[n-c,k,d;c]]$ EAQECC. Moreover, if the standard stabilizer code saturates the quantum Singleton bound, then its derived EAQECCs also saturate the EA Singleton bound (\ref{eq:singleton}). Here we show that this property holds true also for the linear EA Plotkin and EA Griesmer bounds.

\begin{theorem}
    If a standard $[[n,k,d]]$ code saturates the Griesmer bound (linear Plotkin bound), then the derived $[[n-c,k,d;c]]$ EAQECC ($0 \le c \le n-k$) saturates the EA Griesmer bound (linear EA Plotkin bound).
    \label{th:griemserandplotkin}
\end{theorem}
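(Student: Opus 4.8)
The plan is to exploit the fact that converting an $[[n,k,d]]$ stabilizer code into an $[[n-c,k,d;c]]$ EAQECC (in the manner of Ref.~\cite{lai2012entanglement-assisted}) keeps the distance $d$ and the logical dimension $k$ fixed while trading $c$ ancilla qubits for $c$ ebits, so that the relevant ``length plus entanglement'' quantity appearing on the left-hand side of the EA bounds is invariant. Concretely, for the derived code the EA Griesmer bound (\ref{eq:eagriesmer}) reads $\tfrac{(n-c)+c+k}{2}\ge\sum_{i=0}^{k-1}\lceil d/q^{2i}\rceil$, i.e. $\tfrac{n+k}{2}\ge\sum_{i=0}^{k-1}\lceil d/q^{2i}\rceil$, which is exactly the Griesmer bound for the parent $[[n,k,d]]$ code (obtained from (\ref{eq:classicalgriesmerbound}) applied to the associated classical $q^2$-ary code of length $\tfrac{n+k}{2}$, or equivalently the $c:=n-k$ specialization noted after (\ref{eq:eagriesmer})). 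Hence the parent code saturates its Griesmer bound if and only if the derived code saturates the EA Griesmer bound. The identical observation applies to the linear EA Plotkin bound (\ref{eq:plotkinboundlinearqary}): the factor multiplying $n+c+k$ depends only on $q$ and $k$, both unchanged, and $n+c+k=n+(n-k)+k$ reduces the derived-code Plotkin inequality to the parent-code Plotkin inequality. So the argument is essentially a one-line substitution once the invariance $n+c+k = \text{const}$ under the construction is made explicit.

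First I would state precisely the conversion lemma from Ref.~\cite{lai2012entanglement-assisted}: any $[[n,k,d]]$ stabilizer code yields, for each $0\le c\le n-k$, an $[[n-c,k,d;c]]$ EAQECC, in which $c$ of the original $n-k$ ancilla slots are promoted to halves of ebits; crucially the minimum distance $d$ is preserved (this is the content already cited before the theorem, which we are allowed to assume). Second, I would write down the two target inequalities for the derived code with parameters $(n-c,k,d;c)$ and observe that in both cases the combination $(n-c)+c+k$ collapses to $n+k$, eliminating $c$ entirely. Third, I would identify $n+k$ with twice the length of the associated classical code to which the original quantum Griesmer/Plotkin bound applies — i.e., invoke the $c:=n-k$ reduction remarked in the excerpt just after Eqs.~(\ref{eq:eagriesmer}) and (\ref{eq:plotkinboundlinearqary}) — so that the derived-code bound is literally the same inequality as the one the parent code was assumed to saturate. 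Saturation being an equality, it transfers in both directions, which in fact gives a slightly stronger ``if and only if'' statement than asked.

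The main thing to be careful about is not a deep obstacle but a bookkeeping point: one must confirm that the quantity entering the parent code's Griesmer (Plotkin) bound really is $\tfrac{n+k}{2}$ — i.e., that the relevant classical object has length $\tfrac{n+k}{2}$ over the $q^2$-ary alphabet — so that the cancellation of $c$ lands exactly on the parent bound and not on some shifted version of it. This is precisely the normalization already fixed in the paper by the $c:=n-k$ specialization of (\ref{eq:eagriesmer}) and (\ref{eq:plotkinboundlinearqary}); I would simply cite that remark. A secondary check is that the hypothesis range $0\le c\le n-k$ is exactly the range in which the conversion is defined (so $n-c\ge k$ and the number of ebits does not exceed the available ancillas), which guarantees the derived code is a legitimate EAQECC for every admissible $c$. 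No genuinely hard step is anticipated; the content is the recognition that these EA bounds are functions of $n+c+k$, $k$, $d$, and $q$ alone, and that the ancilla-to-ebit conversion leaves all four invariant.
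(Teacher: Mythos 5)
Your proposal is correct and follows essentially the same route as the paper's own proof: the key observation in both is that the EA Griesmer and linear EA Plotkin bounds depend on the code length and the ebit number only through the combination $n+c$ (equivalently $n+c+k$), which is invariant under the ancilla-to-ebit conversion, so the derived code's bound is literally the parent code's bound and saturation transfers. Your additional bookkeeping (checking the admissible range of $c$ and the identification of $\tfrac{n+k}{2}$ with the associated classical length) merely makes explicit what the paper leaves implicit.
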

\begin{proof}
The EA Griesmer bound and linear EA Plotkin bound for an $[[n,k,d;c]]$ code are given by Eqs. (\ref{eq:eagriesmer}) and (\ref{eq:plotkinboundlinearqary}), respectively (setting $q=2$). The dependence of the saturation condition on $n$ and $c$ appears through their sum $n+c$, which is invariant when a QECC is transformed to its derived EAQECC.
\end{proof}
As an illustration of Theorem \ref{th:griemserandplotkin}, the EA Griesmer bound and the linear EA Plotkin bound are both saturated by the Bowen-type $[[3,1,3;2]]_{B}$ \cite{bowen2002entanglement} and $[[4,1,3;1]]_{B}$ \cite{lai2012entanglement-assisted} codes derived from the $[[5,1,3]]$ code. Note that codes that are not derivatives, but share the same parameters $(n,k,d,c)$, will also saturate these bounds.

\begin{theorem}
    Given an $[[n,1,d;c]]$ EAQECC that is non-degenerate or degenerate with $d\leq \frac{n}{2}+1$, such a code saturates either the EA Singleton, EA Griesmer, and linear EA Plotkin bounds or none of the three.
    \label{thm:3in1}
\end{theorem}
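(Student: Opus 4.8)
The plan is to observe that, for $k=1$ over the binary alphabet, the EA Singleton, EA Griesmer, and linear EA Plotkin bounds are literally the same scalar inequality, so that meeting one with equality is tautologically the same as meeting all three. First I would specialize the EA Griesmer bound (\ref{eq:eagriesmer}) at $q=2$, $k=1$: the sum on the right-hand side has only the $i=0$ term, $\lceil d/q^{0}\rceil=\lceil d\rceil=d$, so the ceiling is inert, and the bound reads $\frac{n+c+1}{2}\ge d$, i.e. $n+c+1\ge 2d$. Next I would specialize the linear EA Plotkin bound (\ref{eq:plotkinboundlinearqary}) at $q=2$, $k=1$: the prefactor $\frac{(q^{2}-1)q^{2k}}{2q^{2}(q^{2k}-1)}$ evaluates to $\frac{3\cdot 4}{2\cdot 4\cdot 3}=\frac{1}{2}$, so the bound is again $\frac{1}{2}(n+c+1)\ge d$, i.e. $n+c+1\ge 2d$. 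Finally, the EA Singleton bound (\ref{eq:singleton}) at $k=1$ reads $1\le c+n-2d+2$, i.e. $n+c+1\ge 2d$ --- the very same constraint.

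The hypothesis that the code is non-degenerate, or degenerate with $d\le\frac{n}{2}+1$, is used in exactly one place: it is precisely the regime in which the EA Singleton bound in force is (\ref{eq:singleton}) rather than the large-distance degenerate form (\ref{eq:qsb_d_geq}), so that the Singleton constraint really is $n+c+1\ge 2d$. With this settled, the dichotomy is immediate. Any valid $[[n,1,d;c]]$ EAQECC of the stated type satisfies $n+c+1\ge 2d$; if equality holds then equality holds simultaneously in (\ref{eq:singleton}), (\ref{eq:eagriesmer}) and (\ref{eq:plotkinboundlinearqary}), so the code saturates all three, whereas if the inequality is strict then none of the three is met with equality, so the code saturates none. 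There is no intermediate case, which is the claim.

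There is no substantive obstacle here --- the argument is a one-line specialization --- so the only care needed is bookkeeping: (i) confirming the ceiling in (\ref{eq:eagriesmer}) is trivial for $k=1$; (ii) evaluating the Plotkin prefactor correctly at $q=2$, $k=1$; and (iii) checking that the degeneracy hypothesis is exactly what selects (\ref{eq:singleton}) over (\ref{eq:qsb_d_geq}). One might also note that the classical Plotkin-regime condition $d>(1-\frac{1}{q})n$ underlying (\ref{eq:plotkinboundlinearqary}) causes no trouble: when it fails, the linear EA Plotkin bound is vacuous and the notion of a Plotkin code is undefined, consistent with the convention already used around (\ref{eq:classicalplotkinboundqary}). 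It is worth recording the contrapositive face of the theorem in the same breath: this collapse is exactly why the $k=1$ codes that violate (\ref{eq:singleton}), such as the optimal $[[9,1,6;1]]$ code, necessarily violate the EA Griesmer and linear EA Plotkin bounds as well.
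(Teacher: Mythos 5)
Your proposal is correct and follows essentially the same route as the paper: the paper's proof likewise observes that at $q=2$, $k=1$ the saturation condition for Eqs.~(\ref{eq:singleton}), (\ref{eq:eagriesmer}) and (\ref{eq:plotkinboundlinearqary}) is identically $d=\frac{n+c+1}{2}$, and you have simply carried out that specialization explicitly. Your added remarks on the role of the degeneracy hypothesis in selecting (\ref{eq:singleton}) over (\ref{eq:qsb_d_geq}) and on the triviality of the ceiling are accurate bookkeeping, not a different argument.
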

\begin{proof}
    For an $[[n,1,d;c]]$ EAQECC, we find from Eqs.  (\ref{eq:singleton}), (\ref{eq:eagriesmer}) and (\ref{eq:plotkinboundlinearqary}) (setting $q=2$) that 
 the condition for saturation of all these three bounds is identically $d=\frac{n+1+c}{2}$.
\end{proof}

From a classical $[n,k,d]$ quaternary code, one can construct a nondegenerate $[[n,2k-n+c,d;c]]$ EAQECC \cite{brun2006correcting, brun2014catalytic}, and we know that if that classical quaternary code saturates the Singleton bound, then its corresponding EAQECC also saturates the EA-Singleton bound \cite{brun2006correcting}. An analogous result exists for the  EA Griesmer \cite{li2015entanglement} and EA Plotkin \cite{lu2015maximal} bounds in the case of maximal-entanglement EAQECCs. Here we indicate more general results, covering nonmaximal-entanglement codes that saturate the EA Griesmer and linear EA Plotkin bounds.
\begin{theorem}
The $[[n,\kappa \equiv 2k-n+c,d;c]]$ EAQECCs induced by a classical $[n,k,d]_{4}$ Griesmer code will satisfy the EA Griesmer bound and saturate it if and only if the EAQECC is a maximal-entanglement code or $d \le 4^{\kappa}$.
\label{thm:featuregriesmer}
\end{theorem}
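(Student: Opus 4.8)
The plan is to start from the classical Griesmer bound for the $[n,k,d]_4$ code and translate it, via the quaternary construction, into a statement about the induced EAQECC parameters. For a classical quaternary Griesmer code we have equality in Eq.~(\ref{eq:classicalgriesmerbound}) with $q=4$, i.e. $n = \sum_{i=0}^{k-1}\lceil d/4^i\rceil$. The induced code is $[[n,\kappa,d;c]]$ with $\kappa = 2k-n+c$, so the quantity appearing on the left of the EA Griesmer bound (\ref{eq:eagriesmer}) (with $q=2$, so $q^2=4$) is $\tfrac{1}{2}(n+c+\kappa) = \tfrac{1}{2}(n+c+2k-n+c) = k+c$. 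Hence the EA Griesmer bound for the induced code reads $k+c \ge \sum_{i=0}^{\kappa-1}\lceil d/4^i\rceil$, and I must compare $\sum_{i=0}^{\kappa-1}\lceil d/4^i\rceil$ with $\sum_{i=0}^{k-1}\lceil d/4^i\rceil = n$, keeping track of the $+c$ on the left.

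First I would handle the maximal-entanglement case $c=n-k$ separately: then $\kappa = 2k-n+(n-k)=k$, and the left side $k+c = k + (n-k) = n = \sum_{i=0}^{k-1}\lceil d/4^i\rceil = \sum_{i=0}^{\kappa-1}\lceil d/4^i\rceil$, so the EA Griesmer bound is saturated exactly; this recovers the known maximal-entanglement result \cite{li2015entanglement}. For the nonmaximal case, note $\kappa = 2k-n+c < k$ precisely when $c < n-k$, and write $j \equiv k-\kappa = n-k-c \ge 1$. Then $\sum_{i=0}^{k-1}\lceil d/4^i\rceil = \sum_{i=0}^{\kappa-1}\lceil d/4^i\rceil + \sum_{i=\kappa}^{k-1}\lceil d/4^i\rceil$, and the EA Griesmer inequality $k+c \ge \sum_{i=0}^{\kappa-1}\lceil d/4^i\rceil$ is equivalent to $n = \sum_{i=0}^{k-1}\lceil d/4^i\rceil \le (k+c) + \sum_{i=\kappa}^{k-1}\lceil d/4^i\rceil$, i.e. $n - k - c \le \sum_{i=\kappa}^{k-1}\lceil d/4^i\rceil$, i.e. $j \le \sum_{i=\kappa}^{\kappa+j-1}\lceil d/4^i\rceil$. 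Since each ceiling term is at least $1$ (as $d\ge 1$), the right side is at least $j$, so the EA Griesmer bound always holds — this establishes the ``will satisfy'' half. Saturation is then equivalent to $\sum_{i=\kappa}^{\kappa+j-1}\lceil d/4^i\rceil = j$, which, each term being a positive integer, forces every term to equal exactly $1$, i.e. $\lceil d/4^i\rceil = 1$ for all $i$ in the range $\kappa \le i \le k-1$; the binding constraint is the smallest such $i$, namely $i=\kappa$, giving $d \le 4^{\kappa}$.

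The remaining step is to check that $d \le 4^{\kappa}$ is not only necessary but sufficient for saturation in the nonmaximal case, and to reconcile the two cases into the single ``iff'' statement of the theorem. If $d \le 4^\kappa$ then $d \le 4^i$ for every $i \ge \kappa$, so $\lceil d/4^i\rceil = 1$ throughout the range, the sum equals $j$, and equality holds; conversely saturation forced $d\le 4^\kappa$ as above. Together with the maximal-entanglement computation this gives: the induced code saturates the EA Griesmer bound iff $c = n-k$ or $d\le 4^\kappa$. I expect the main subtlety — not a deep obstacle, but the place to be careful — to be the bookkeeping around the edge cases: ensuring $\kappa \ge 1$ so that the bound Eq.~(\ref{eq:eagriesmer}) is even stated (it requires $k\ge 1$, here $\kappa\ge 1$), and checking that when $\kappa = k$ the ``extra sum'' $\sum_{i=\kappa}^{k-1}$ is empty so the argument degenerates correctly to the maximal case; one should also confirm that the classical Griesmer code being quaternary is exactly what makes $q^2 = 4$ match, so no hidden mismatch of alphabet sizes creeps in.
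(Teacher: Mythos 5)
Your proposal is correct and follows essentially the same route as the paper's proof: both split the saturated classical Griesmer sum into the first $\kappa$ terms plus a tail of $n-k-c$ terms (your $j$ is the paper's $a/2$), observe that each tail term is at least $1$ so the EA Griesmer bound holds, and characterize saturation by forcing every tail term to equal $1$, which is exactly $d\le 4^{\kappa}$. Your explicit treatment of the maximal-entanglement case and of sufficiency is slightly more careful than the paper's, but the argument is the same.
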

\begin{proof}
By assumption, the classical $[n,k,d]_{4}$ Griesmer code saturates the classical Griesmer bound (\ref{eq:classicalgriesmerbound})
\begin{align}
    n = \sum_{i=0}^{k-1}\biggl \lceil \frac{d}{4^{i}} \biggr \rceil \equiv \sum_{i=0}^{k-a/2-1}\biggl \lceil \frac{d}{4^{i}} \biggr \rceil + \sum_{i=k-a/2}^{k-1}\biggl \lceil \frac{d}{4^{i}} \biggr \rceil, ~~ a\ge2.
    \label{eq:classicalgriesmerbound=}
\end{align}
We know that this classical code induces $[[n,\kappa,d;c]]$ EAQECCs. Noting that $c+\kappa\equiv n-a$, it follows that $\frac{a}{2} = n-k-c$. We can then rewrite Eq. (\ref{eq:eagriesmer}) for the EA Griesmer bound as 
\begin{align}
 n \geq \sum_{i=0}^{k-a/2-1}\biggl \lceil \frac{d}{4^{i}} \biggr \rceil + \frac{a}{2}.
 \label{eq:eagriesmercheck}
\end{align}
For $a=0$, the saturation clearly holds. Now consider the case of $a>0$. Note that the second summand on the right-hand side of Eq. (\ref{eq:classicalgriesmerbound=}) has exactly $\frac{a}{2}$ terms. Comparing the right-hand sides of Eqs. (\ref{eq:eagriesmercheck}) and (\ref{eq:classicalgriesmerbound=}), we find that saturation holds if and only if the second summand in Eq. (\ref{eq:classicalgriesmerbound=}) evaluates to $\frac{a}{2}$, i.e., each term in the summation is 1, which can happen only if $d \le 4^{\kappa}$.
\end{proof}
As examples illustrating Theorem \ref{thm:featuregriesmer}, quaternary Griesmer codes $[6,3,4]$, $[10,4,6]$, $[12,6,6]$, $[16,8,8]$, and $[21,3,16]$ induce the family of EAQECCs $[[6,3\text{-}1,4;3\text{-}1]]$, $[[10,4\text{-}2,6;6\text{-}4]]$, $[[12,6\text{-}2,6;6\text{-}2]]$, $[[16,8\text{-}2,8;8\text{-}2]]$, and $[[21,3\text{-}2,16;18\text{-}17]]$, respectively, each of which contains maximal-entanglement and nonmaximal-entanglement codes that saturate the EA Griesmer bound. Here we use the notation $[[n,\kappa_{p}\text{-}\kappa_{q},d;c_{p}\text{--}c_{q}]]$, where $\kappa_p-\kappa_q = c_p-c_q>0$, to denote the family of codes $[[n,\kappa_{p},d;c_{p}]], [[n,\kappa_{p}-1,d;c_{p}-1]], \ldots,[[n,\kappa_{q}+1,d;c_{q}+1]], [[n,\kappa_{q},d;c_{q}]]$.

\begin{theorem}
The $[[n,\kappa\equiv2k-n+c,d;c]]$ EAQECCs induced by a classical $[n,k,d]_{4}$ Plotkin code will satisfy the linear EA Plotkin bound, and saturate it if and only if the EAQECC is a maximal-entanglement code or $\frac{d}{3}\frac{1-4^{-a/2}}{4^{\kappa-1}}=\frac{a}{2}$.
\label{thm:featureplotkin}
\end{theorem}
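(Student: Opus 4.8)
The plan is to follow the architecture of the proof of Theorem~\ref{thm:featuregriesmer}, but with the ceiling sums replaced throughout by their ceiling-free geometric-progression counterparts. The first step is to re-express the hypothesis: a classical $[n,k,d]_{4}$ Plotkin code saturates Eq.~(\ref{eq:classicalplotkinboundqary}) with $q=4$, and rearranging that equality gives $n=\sum_{i=0}^{k-1} d/4^{i}$. I would then sandwich this between the classical Griesmer bound~(\ref{eq:classicalgriesmerbound}) (with $q=4$) and the termwise inequality $\lceil x\rceil\ge x$, obtaining $\sum_{i=0}^{k-1} d/4^{i}=n\ge\sum_{i=0}^{k-1}\lceil d/4^{i}\rceil\ge\sum_{i=0}^{k-1} d/4^{i}$. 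Equality throughout forces $\lceil d/4^{i}\rceil=d/4^{i}$, i.e.\ $d/4^{i}$ is a positive integer for every $0\le i\le k-1$. This divisibility fact is the analogue of the observation $\lceil d/4^{i}\rceil\ge1$ used in Theorem~\ref{thm:featuregriesmer}, and it is what will let a term-by-term comparison go through.

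Next comes the parameter bookkeeping, which is identical to that in Theorem~\ref{thm:featuregriesmer}: the induced $[[n,\kappa,d;c]]$ code has $\kappa=2k-n+c$ and $a$ ancillas with $\tfrac{a}{2}=n-k-c$, so $k=\kappa+\tfrac{a}{2}$ and $n+c+\kappa=2n-a$. As the text below Eq.~(\ref{eq:plotkinboundlinearqary}) records, the linear EA Plotkin bound~(\ref{eq:plotkinboundlinearqary}) with $q=2$ is precisely the ceiling-free version of the EA Griesmer bound~(\ref{eq:eagriesmer}), namely $\tfrac{n+c+\kappa}{2}\ge\sum_{i=0}^{\kappa-1} d/4^{i}$. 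Substituting $n+c+\kappa=2n-a$ and $\kappa=k-\tfrac{a}{2}$ turns this into $n\ge\sum_{i=0}^{k-a/2-1} d/4^{i}+\tfrac{a}{2}$.

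I would then split the classical identity of the first step as $n=\sum_{i=0}^{k-a/2-1} d/4^{i}+\sum_{i=k-a/2}^{k-1} d/4^{i}$ and subtract, so that the linear EA Plotkin bound for the induced code is equivalent to $\sum_{i=k-a/2}^{k-1} d/4^{i}\ge\tfrac{a}{2}$. The left-hand side is a sum of exactly $\tfrac{a}{2}$ positive integers (by the divisibility established above), so the inequality holds, which proves that every induced code satisfies the bound; moreover equality holds if and only if each of those $\tfrac{a}{2}$ summands equals $1$. Evaluating the geometric sum, $\sum_{i=k-a/2}^{k-1} d/4^{i}=\tfrac{d}{4^{\kappa}}\cdot\tfrac{1-4^{-a/2}}{1-4^{-1}}=\tfrac{d}{3}\cdot\tfrac{1-4^{-a/2}}{4^{\kappa-1}}$, so saturation of the bound is equivalent to $\tfrac{d}{3}\tfrac{1-4^{-a/2}}{4^{\kappa-1}}=\tfrac{a}{2}$. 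For the maximal-entanglement case $a=0$ this condition reads $0=0$ and is automatic (equivalently, the EA Plotkin bound collapses to the classical Plotkin bound, which holds with equality by hypothesis), so saturation holds there too; this yields the stated ``maximal-entanglement code or [equality]'' dichotomy.

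I expect the one genuinely nontrivial point to be the first step --- recognizing that a classical Plotkin code is automatically forced to have $4^{i}\mid d$ for all $i<k$ via the Griesmer--Plotkin sandwich --- because that is exactly what makes the term-by-term argument self-contained, just as $\lceil d/4^{i}\rceil\ge1$ does in Theorem~\ref{thm:featuregriesmer}. Everything after that is the geometric-progression mirror image of manipulations already performed in the excerpt. (If one wished to skip the divisibility observation, satisfaction of the bound could instead be invoked directly from the validity of the linear EA Plotkin bound for EAQECCs built from quaternary codes~\cite{guo2013linear}, but the self-contained route keeps the proof parallel to the Griesmer case.)
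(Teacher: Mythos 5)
Your proposal is correct and follows essentially the same route as the paper's proof: split the ceiling-free Plotkin saturation identity $n=\sum_{i=0}^{k-1}d/4^{i}$ at $i=k-a/2$, rewrite the linear EA Plotkin bound for the induced code as $n\ge\sum_{i=0}^{k-a/2-1}d/4^{i}+\tfrac{a}{2}$ using $\tfrac{a}{2}=n-k-c$, and evaluate the tail as a geometric progression to obtain the stated saturation condition. Your extra observation that the Griesmer--Plotkin sandwich forces $4^{i}\mid d$ for all $i<k$ is a useful refinement that makes the ``satisfies the bound'' half fully explicit (each tail term is a positive integer, hence at least $1$), a point the paper's terser proof leaves implicit.
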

\begin{proof}
Earlier we showed that without ceiling function, Griesmer bounds (\ref{eq:classicalgriesmerbound}) and (\ref{eq:eagriesmer}) imply Plotkin bounds (\ref{eq:classicalplotkinboundqary}) and (\ref{eq:plotkinboundlinearqary}). Thus, by assumption, the classical $[n,k,d]_{4}$ Plotkin code saturates the classical Plotkin bound (\ref{eq:classicalplotkinboundqary})
\begin{align}
    n = \sum_{i=0}^{k-1} \frac{d}{4^{i}} \equiv \sum_{i=0}^{k-a/2-1}\frac{d}{4^{i}} + \sum_{i=k-a/2}^{k-1}\frac{d}{4^{i}} ,~~ a\ge2.
    \label{eq:classicalplotkinbound=}
\end{align}
We know that this classical code induces $[[n,\kappa,d;c]]$ EAQECCs. Noting that $c+\kappa\equiv n-a$, it follows that $\frac{a}{2} = n-k-c$. We can then rewrite Eq. (\ref{eq:plotkinboundlinearqary}) (setting $q=2$) for the linear EA Plotkin bound as 
\begin{align}
 n \geq \sum_{i=0}^{k-a/2-1}\frac{d}{4^{i}} + \frac{a}{2}.
 \label{eq:eaplotkincheck}
\end{align}
For $a=0$, the saturation clearly holds. Now consider the case of $a>0$. Comparing the right-hand sides of Eqs. (\ref{eq:eaplotkincheck}) and (\ref{eq:classicalplotkinbound=}), we find that saturation holds if and only if the second summand in Eq. (\ref{eq:classicalplotkinbound=}) evaluates to $\frac{a}{2}$. Noting that the summation is a geometric progression, we obtain the required result.
\end{proof}
Note that our result generalizes a result due to \cite{lu2015maximal}, which is obtained by requiring the derived EAQECC in Theorem \ref{thm:featureplotkin} to possess the same rate as the parent quaternary code, i.e., $2k-n+c {:=} k$, so that $c=n-k$, corresponding to a maximal-entanglement $[[n,k,d;n-k]]$ code.

As examples illustrating Theorem \ref{thm:featureplotkin}, consider quaternary Plotkin codes $[21,3,16]$ and $[85,4,64]$. They induce the family of EAQECCs $[[21,3\text{-}2,16;18\text{-}17]]$ and $[[85,4\text{-}3,64;81\text{-}80]]$, respectively, in which the maximal-entanglement and nonmaximal-entanglement codes saturate the EA Plotkin bound.

Paralleling a result for the Griesmer bound for CSS codes with $d\geq q$ \cite{sarvepalli2010}, we have the following results.
\begin{observation}
    Any classical $[n,k,d]_{q}$ code with $d\ge q$ satisfies
    \begin{align}
        n-k\geq d\bigg(1+\frac{1}{q}\bigg)-2.
        \label{eq:classicalgbdgeq}
    \end{align}
    \label{obs:classicalgbdgeq}
    \end{observation}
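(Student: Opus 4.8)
The plan is to obtain the inequality directly from the classical Griesmer bound (\ref{eq:classicalgriesmerbound}), $n \geq \sum_{i=0}^{k-1}\lceil d/q^{i}\rceil$. The crucial point — and what distinguishes this from a mere lower bound on $n$ — is that I must \emph{retain} every term of the Griesmer sum rather than discard the higher-order ones: each of the $k-2$ tail terms will be used to absorb one unit of the $-k$ sitting on the left-hand side of the target. So the strategy is to split the sum into its two leading terms plus a tail, estimate each piece by the sharpest elementary bound, and then transpose $k$.

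First I would isolate the $i=0$ term, which is exactly $\lceil d\rceil = d$, and the $i=1$ term, for which I drop the ceiling to get $\lceil d/q\rceil \ge d/q$. Next I would bound the remaining terms with $2\le i\le k-1$: since $d/q^{i}>0$, each satisfies $\lceil d/q^{i}\rceil\ge 1$, and there are exactly $k-2$ of them, so they contribute at least $k-2$ in aggregate. Assembling the three pieces yields
\[
n \;\ge\; d + \frac{d}{q} + (k-2),
\]
and moving $k$ to the left gives $n-k \ge d\bigl(1+\tfrac{1}{q}\bigr)-2$, which is precisely (\ref{eq:classicalgbdgeq}). Note that, written this way, the core estimate holds for any admissible $d$ once $k\ge 2$, so no appeal to $d\ge q$ is needed inside the algebra itself.

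The main subtlety — and the step I would flag explicitly — is the interplay between the hypothesis $d\ge q$ and the range of $k$. The hypothesis $d\ge q$ is what makes the bound \emph{nontrivial}: it forces the right-hand side $d(1+1/q)-2 \ge q-1 \ge 0$ to be a genuine (nonnegative) lower bound on the redundancy, matching the CSS parallel of \cite{sarvepalli2010}. The operative structural requirement, however, is $k\ge 2$, since the absorption argument needs the $i=1$ term to be present and distinct from the tail. For $k=1$ the Griesmer sum collapses to $n\ge d$, giving only $n-k\ge d-1$, which meets the claimed bound solely in the boundary case $d=q$ (the repetition code $[q{+}1,1,q{+}1]_q$ already shows the inequality can fail once $d>q$ with $k=1$). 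I would therefore state the argument under $k\ge 2$ and record the $k=1$ equality/boundary behaviour, confirming that the quaternary codes feeding Theorems \ref{thm:featuregriesmer} and \ref{thm:featureplotkin} lie in the $k\ge 2$ regime where the Observation is used. No finer bookkeeping is required, since dropping the ceiling on the $i=1$ term and bounding each tail term below by $1$ are already the tightest elementary estimates.
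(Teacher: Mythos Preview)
Your proof is correct and follows exactly the paper's route: split the Griesmer sum into the $i=0$ term, the $i=1$ term (dropping the ceiling), and the tail $\sum_{i=2}^{k-1}\lceil d/q^{i}\rceil\ge k-2$, then transpose $k$. Your extra care in flagging the $k\ge 2$ requirement and the $k=1$ boundary behaviour is a welcome addition, since the paper's own argument tacitly assumes $k\ge 2$ without comment.
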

    \begin{proof}
        From Eq. (\ref{eq:classicalgriesmerbound}), since $d\ge q$ we have
        \begin{align*}
            n\geq d+\frac{d}{q}+\sum_{i=2}^{k-1}\biggl \lceil \frac{d}{q^{i}} \biggr \rceil \geq d+\frac{d}{q}+k-2,
        \end{align*}
which gives the required result.        
    \end{proof}
In Observation \ref{obs:classicalgbdgeq}, setting $q\equiv2$, we have $n-k\ge \frac{3}{2}d-2 > d-1$ $(d>q=2)$. This is consistent with the fact that the Griesmer bound is stronger than the Singleton bound. Similar to Observation \ref{obs:classicalgbdgeq}, we have the following result for EAQECCs.
\begin{observation}
    Any $[[n,k,d;c]]_{q}$ code derived from a classical $q^{2}$-ary code with $d\ge q^{2}$ satisfies
    \begin{align}
        n-k+c\geq 2d\bigg(1+\frac{1}{q^2}\bigg)-4.
        \label{eq:eagbdgeq}
    \end{align}
    \label{obs:eagbdgeq}
    \end{observation}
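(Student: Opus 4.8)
The plan is to run the argument of Observation~\ref{obs:classicalgbdgeq} one level up, replacing the classical Griesmer bound~(\ref{eq:classicalgriesmerbound}) by the EA Griesmer bound~(\ref{eq:eagriesmer}). Since the $[[n,k,d;c]]_q$ code in question is by hypothesis derived from a classical $q^2$-ary code, the bound~(\ref{eq:eagriesmer}) applies and reads $\frac{n+c+k}{2}\ge\sum_{i=0}^{k-1}\lceil d/q^{2i}\rceil$; the role of the hypothesis $d\ge q^2$ is to make the first two terms of this sum large enough to deliver~(\ref{eq:eagbdgeq}).

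First I would isolate the $i=0$ and $i=1$ summands of the right-hand side. The $i=0$ term equals $d$ exactly, and since $d\ge q^2$ the $i=1$ term satisfies $\lceil d/q^2\rceil\ge d/q^2$ (and is in particular at least $1$). Each of the remaining $k-2$ terms $\lceil d/q^{2i}\rceil$ with $2\le i\le k-1$ is the ceiling of a positive quantity, hence at least $1$. Combining these estimates gives $\frac{n+c+k}{2}\ge d+\frac{d}{q^2}+(k-2)$.

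Next I would multiply through by $2$ and collect the terms involving $k$, which yields $n+c-k\ge 2d\bigl(1+\tfrac{1}{q^2}\bigr)-4$, i.e.\ exactly the inequality~(\ref{eq:eagbdgeq}). As a sanity check, setting $q\equiv 2$ gives $n-k+c\ge \tfrac{5}{2}d-4$, which for $d\ge q^2=4$ is at least $2d-2$, consistent with the EA Griesmer bound being stronger than the EA Singleton bound~(\ref{eq:singleton}) in this range.

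The main subtlety, rather than a genuine obstacle, is the same implicit restriction already present in Observation~\ref{obs:classicalgbdgeq}: the decomposition into ``two leading terms plus $k-2$ unit lower bounds'' presupposes $k\ge 2$. For $k=1$ the EA Griesmer bound degenerates to $n+c\ge 2d-1$, and the statement is to be understood in the regime $k\ge2$ (with $k=1$ retaining only the boundary case $d=q^2$). Apart from this bookkeeping and the two trivial ceiling estimates, the content of the observation is carried entirely by the EA Griesmer bound together with $d\ge q^2$.
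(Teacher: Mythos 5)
Your proof is correct and follows essentially the same route as the paper: apply the EA Griesmer bound (\ref{eq:eagriesmer}), peel off the $i=0$ and $i=1$ terms using $d\ge q^2$, lower-bound the remaining $k-2$ ceilings by $1$, and simplify. Your added remark that the decomposition implicitly assumes $k\ge 2$ is a fair bookkeeping point that the paper leaves tacit, but it does not change the substance of the argument.
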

    \begin{proof}
        From Eq. (\ref{eq:eagriesmer}), since $d\ge q^{2}$ we have
        \begin{align*}
            \frac{n+c+k}{2}\geq d+\frac{d}{q^2}+\sum_{i=2}^{k-1}\biggl \lceil \frac{d}{q^{2i}} \biggr \rceil \geq d+\frac{d}{q^2}+k-2,
        \end{align*}
which upon simplification gives the required result.        
    \end{proof}
In Observation \ref{obs:eagbdgeq}, setting $q\equiv2$, we have $n-k+c\ge \frac{5}{2}d-4 > 2(d-1)$ $(d>q^2=4)$. This is consistent with the fact that the EA Griesmer bound is stronger than the EA Singleton bound (\ref{eq:singleton}). As a consequence of Observation \ref{obs:eagbdgeq}, we have following result for the number of correctable errors by a $q$-ary EAQECC.
\begin{observation}
    For any $[[n,k,d;c]]_{q}$ code derived from a classical $q^{2}$-ary code with $d\ge q^2$, the number of correctable errors is less than $\big\lfloor\frac{q^2(n-k+c)+2q^{2}-2}{4(q^2+1)}\big\rfloor$.
\end{observation}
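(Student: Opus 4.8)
The plan is to derive this directly from Observation~\ref{obs:eagbdgeq} by converting the lower bound on $n-k+c$ into an upper bound on the number of correctable errors $t = \lfloor (d-1)/2 \rfloor$. First I would take the inequality $n-k+c \geq 2d(1+q^{-2}) - 4$ from Observation~\ref{obs:eagbdgeq}, valid for $d \geq q^2$, and solve it for $d$, obtaining
\begin{align*}
    d \leq \frac{q^2(n-k+c) + 4q^2}{2(q^2+1)} = \frac{q^2(n-k+c+4)}{2(q^2+1)}.
\end{align*}
Since $t \leq (d-1)/2$, this yields $t \leq \frac{q^2(n-k+c+4) - 2(q^2+1)}{4(q^2+1)} = \frac{q^2(n-k+c) + 2q^2 - 2}{4(q^2+1)}$, which is exactly the quantity appearing in the statement.

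The one genuine subtlety is the strict inequality ``$<$'' together with the floor. I would argue that $t$ is an integer, while the real number $R \equiv \frac{q^2(n-k+c) + 2q^2 - 2}{4(q^2+1)}$ generically is not an integer, so that $t \leq R$ already gives $t \leq \lfloor R \rfloor$; the claim then follows provided we can rule out equality $t = \lfloor R\rfloor$ in the edge case where $R$ happens to be an integer. The cleanest route is to sharpen the chain of inequalities: in the proof of Observation~\ref{obs:eagbdgeq} the step $\sum_{i=2}^{k-1} \lceil d/q^{2i}\rceil \geq k-2$ is an equality only when each ceiling term equals $1$, and more importantly the step bounding $d$ via the Griesmer sum can be made strict whenever $d \geq q^2$ because $\lceil d/q^2 \rceil \geq d/q^2$ with the floor/ceiling discrepancy, or alternatively because the quantum distance is strictly bounded by the classical $q^2$-ary distance for $k \geq 1$. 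I expect the main obstacle to be precisely this: pinning down that the inequality is strict rather than merely non-strict, which requires either invoking the integrality of $n, k, c, d$ carefully or tracing back through the derivation of Eq.~(\ref{eq:eagriesmer}) to locate a guaranteed slack. If a fully clean strictness argument proves elusive, a fallback is to state the bound with $\leq$ and $\lfloor\cdot\rfloor$, or to restrict to $d > q^2$ where the ceiling $\lceil d/q^2\rceil > d/q^2$ supplies the needed strict gap.

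Everything else is routine algebra: rearranging the linear inequality, substituting $t = \lfloor (d-1)/2\rfloor \leq (d-1)/2$, and simplifying. No new machinery is needed beyond Observation~\ref{obs:eagbdgeq} itself, and the hypotheses ($q$-ary EAQECC derived from a classical $q^2$-ary code, $d \geq q^2$) are inherited verbatim from that observation.
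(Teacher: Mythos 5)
Your proposal is correct and follows essentially the same route as the paper: rearrange the EA Griesmer--Rains inequality of Observation~\ref{obs:eagbdgeq} to get $d\le \frac{q^2(n-k+c+4)}{2(q^2+1)}$ and then bound $t=\lfloor (d-1)/2\rfloor$. Your concern about strictness is well placed but not a gap on your side --- the paper's own derivation only establishes $t\le\big\lfloor\frac{q^2(n-k+c)+2q^2-2}{4(q^2+1)}\big\rfloor$ (it even writes the conclusion as an equality of floors), so the ``less than'' in the statement is the paper's imprecision, and your fallback of stating the bound with $\le$ is exactly what the argument supports.
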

\begin{proof}
     By Observation \ref{obs:eagbdgeq}, we have
     \begin{align*}
            \frac{q^2\{(n-k+c)+4\}}{2(q^2+1)}\ge d,
        \end{align*}
 which gives $\lfloor \frac{d-1}{2}\rfloor = \Big\lfloor\frac{q^2(n-k+c)+2q^{2}-2}{4(q^2+1)}\Big\rfloor$.
\end{proof}
Note that the substitutions $q^2\xrightarrow{} q :=2$ and $c\xrightarrow{}0$ yield the corresponding result for binary non-EA CSS codes \cite{sarvepalli2010}, which has been called the quantum Griesmer-Rains bound for CSS codes \cite{chandra2017}. Accordingly, the above Griesmer-based bound (\ref{eq:eagbdgeq}) may be considered as the EA version of the Griesmer-Rains bound appropriate for EAQECCs derived from a classical code.  

The analog of Theorem \ref{thm:featuregriesmer} for the EA-Griesmer-Rains bound(\ref{eq:eagbdgeq}) is as follows.
\begin{theorem}
The $[[n,\kappa \equiv 2k-n+c,d;c]]$ EAQECCs induced by a classical $[n,k,d]_{4}$ saturating the classical Griesmer-based bound (\ref{eq:classicalgbdgeq}) with $d\ge4$ will saturate the EA Griesmer-Rains bound (\ref{eq:eagbdgeq}).
\label{thm:featuregbdgeq}
\end{theorem}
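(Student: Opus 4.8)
The plan is to exploit the fact that passing from the parent quaternary code to the induced EAQECC scales both the Griesmer-Rains quantity and the right-hand side of the bound by exactly the same factor of two, so that an equality is carried over verbatim. First I would record the hypothesis explicitly for the quaternary case. Since the classical $[n,k,d]_4$ code saturates the Griesmer-based bound (\ref{eq:classicalgbdgeq}) with $q=4$, we have
\begin{align}
n-k = d\left(1+\frac{1}{4}\right) - 2 = \frac{5d}{4}-2.
\label{eq:plan-classical-sat}
\end{align}

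Next I would compute the left-hand quantity appearing in the EA Griesmer-Rains bound (\ref{eq:eagbdgeq}) for the induced code. Using $\kappa = 2k-n+c$, a one-line substitution gives
\begin{align}
n-\kappa+c = n-(2k-n+c)+c = 2(n-k).
\label{eq:plan-identity}
\end{align}
This is the structural heart of the argument: the Griesmer-Rains quantity of the induced binary EAQECC is exactly twice that of the parent quaternary code, independently of how the $a=n-k-c$ ancillas are distributed among the generators.

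Finally I would combine the two facts. Setting $q=2$ (so $q^2=4$) in the EA Griesmer-Rains bound (\ref{eq:eagbdgeq}) makes its right-hand side $2d(1+\tfrac{1}{4})-4=\frac{5d}{2}-4$, which is precisely twice the right-hand side of (\ref{eq:plan-classical-sat}). Substituting (\ref{eq:plan-classical-sat}) into (\ref{eq:plan-identity}) then yields
\begin{align}
n-\kappa+c = 2\left(\frac{5d}{4}-2\right) = \frac{5d}{2}-4,
\label{eq:plan-final}
\end{align}
which is exactly the saturation of (\ref{eq:eagbdgeq}). The hypothesis $d\ge 4$ simultaneously guarantees that the classical bound applies ($d\ge q=4$) and that the EA bound applies ($d\ge q^2=4$), so no separate range check is required.

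The point I would watch most carefully is not a genuine obstacle but a bookkeeping check: one must confirm that \emph{both} the coefficient of $d$ and the additive constant double in the transition from the classical bound to the EA bound (that is, $\tfrac{5}{4}\to\tfrac{5}{2}$ \emph{and} $-2\to-4$). It is this uniform doubling of the entire right-hand side, matching the doubling of the left-hand side in (\ref{eq:plan-identity}), that lets saturation transfer unconditionally here. This contrasts with the if-and-only-if conditions in Theorems \ref{thm:featuregriesmer} and \ref{thm:featureplotkin}, where the induced code involves the full Griesmer or Plotkin summation and only a proper subset of the summands collapses, so that extra conditions on $d$ are needed; the present bound is already the linearized two-term form on both the classical and EA sides, which is exactly why no such side condition survives.
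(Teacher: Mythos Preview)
Your proposal is correct and follows essentially the same approach as the paper: both compute $n-\kappa+c = 2(n-k)$, invoke the assumed classical saturation $n-k=\tfrac{5d}{4}-2$, and conclude $n-\kappa+c=\tfrac{5d}{2}-4$, which is precisely saturation of (\ref{eq:eagbdgeq}). Your version is slightly more explicit about why the factor-of-two scaling makes the saturation transfer unconditional (in contrast to Theorems \ref{thm:featuregriesmer} and \ref{thm:featureplotkin}), but the core argument is identical.
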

\begin{proof}
By assumption, the classical $[n,k,d]_{4}$ code saturates the classical Griesmer-based bound (\ref{eq:classicalgbdgeq})
\begin{equation}
    n-k=\frac{5d}{4}-2.
    \label{eq:this}
\end{equation}
The induced \mbox{$[[n,2k-n+c,d;c]]$} code satisfies the EA Griesmer-Rains bound (\ref{eq:eagbdgeq}). Thus we have 
\begin{align}n-(2k-n+c)+c=2(n-k) \geq \frac{5}{2}d-4.
\label{eq:that}
\end{align} However, this bound must be saturated in light of Eq. (\ref{eq:this}).
\end{proof}
Note that as the $c$ drops out on the left-hand side of Eq. (\ref{eq:that}), the saturation holds irrespective of maximal and nonmaximal entanglement codes. This stands in contrast to Theorem \ref{thm:featuregriesmer}.

\section{Concatenated entanglement-assisted quantum error-correcting codes}\label{sec-eacqeccs}
As compared to a standard QECC, an EAQECC is characterized by the additional code parameter $c$, denoting the number of shared ebits, which can be optimized under different criteria  \cite{wilde2008optimal}, including the component code parameters.
In this work, for brevity of notation, we denote the concatenation of code $\mathcal{C}_o$ (outer) and $\mathcal{C}_i$ (inner) by $\mathcal{C}_o \rhd \mathcal{C}_i$. The conventional concatenation framework effectively imposes a divisibility constraint whereby the inner code rate $k_i$ must divide the outer code’s block structure for direct replacement of outer symbols by inner code blocks, or multiple outer code blocks are jointly encoded by the inner code. This leads to two types of broad procedure for the concatenation of two codes, depending on the dimension of particles (qudits) considered and the divisibility of the outer code length by the inner code rate. In the case of the GCQC framework, the divisibility constraint is circumvented because the outer symbol dimension is not matched with inner logical dimension in the same way, but instead determines the number of inner subcodes $\mathcal{Q}_j$.

These two frameworks overlap in the quantum version of the Forney scheme \cite{forney1966concatenated}, where each outer code degree of freedom corresponds to a block of the inner code, which generically leads to non-binary outer codes, i.e., the inner code is binary and the outer is a $2^{k_i}$-nary code. A number of properties of these codes in the EA paradigm are studied in Ref. \cite{fan2022entanglementassisted}. On the other hand, various authors have studied quantum code concatenation in the conventional framework using inner and outer binary codes, even with $k_i>1$, among them (see Sec. 3.4 in\cite{gaitan2008quantum} and Chap. 3.5 in \cite{gottesman1997stabilizer}). Continuing this latter line of research, here we study the concatenation of two arbitrary binary codes under the framework of conventional concatenated quantum codes.\color{black} 

Suppose we have an $[[n_o,k_o,d_o]]$ outer code $\mathcal{C}_o$ and an $[[n_i,k_i,d_i]]$ inner code $\mathcal{C}_i$. In the case of multiqubit encoding $(k_i\ge1)$, there are two construction cases (Procedures 1 and 2) for the concatenation $\mathcal{C}_o \rhd \mathcal{C}_i$. If $k_i \mid n_o$, first encode $k_o$ qubits into $n_o$ qubits using $\mathcal{C}_o$. Next each of the $\frac{n_o}{k_i}$ blocks of size $k_i$ qubits are encoded into $n_i$ qubits using $\mathcal{C}_i$. This yields an $[[\frac{n_on_i}{k_i}, k_o]]$ code with distance $d\ge \frac{d_od_i}{k_i}$ (Procedure 1 of \cite{gaitan2008quantum}). Note that Forney's procedure \cite{forney1966concatenated} can also be considered as a special realization of Procedure $1$, with $n_o \rightarrow n_ok_i$ and $k_o \rightarrow k_ok_i$. In the second case (Procedure 2), i.e., in the event that $k_i \nmid n_o$, first encode $k_ok_i$ qubits into $n_ok_i$ qubits applying $\mathcal{C}_o$ of $k_i$ blocks of length $n_o$. Next partition the resulting string into $n_o$ blocks of length $k_i$. Each of these blocks is encoded into $n_i$ qubits using $\mathcal{C}_i$. This yields a $[[n_on_i, k_ok_i]]$ code with distance $d\ge d_od_i$ (Procedure 2 of \cite{gaitan2008quantum}). The latter procedure is important for fault-tolerant quantum computation architecture \cite{gottesman2024surviving}. For example, Yamasaki and Koashi \cite{yamasaki2024time} employ this procedure for a system of nested quantum $[[2^r - 1, 2^r - 2r - 1, 3]]$ Hamming codes with $r=3,4,5,\dots$ in successive layers to achieve constant space overhead with polynomially increasing time overhead.
%Note that $k_i=1$ is considered a special case of the first construction procedure. 
As with a QECC, so with an EAQECC, different concatenation schemes arise depending on whether $k_i \mid n_o$, as discussed in the following lemmas.
\begin{lemma}
Given two EAQECCs $\mathcal{C}_o$ and $\mathcal{C}_i$  with the parameters $[[n_o,k_o,d_o;c_o]]$ and $[[n_i,k_i,d_i;c_i]]$, respectively, if $k_i \mid n_o$ then the concatenation $\mathcal{C}_o \rhd \mathcal{C}_i$ is an $[[\frac{n_on_i}{k_i}, k_o, d \geq \frac{d_od_i}{k_i};c]]$ code with $c = c_o+\frac{c_i n_o}{k_i}$.
\label{lem:divisible}
\end{lemma}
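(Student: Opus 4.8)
The plan is to establish the three components of the claim separately: the block length, the logical dimension, and the ebit count, with the distance bound already supplied by the classical/stabilizer concatenation machinery (Procedure 1 of \cite{gaitan2008quantum}) since the EA structure does not alter the support of the encoded operators. First I would recall from Eq. (\ref{eq: logicalstateeaqecc}) that an $[[n,k,d;c]]$ EAQECC uses $n-k$ total non-ancilla/ancilla slots split as $a = n-k-c$ ancilla qubits and $c$ ebit-halves, and that concatenation replaces each logical qubit line of the outer code by a fresh copy of the inner encoder. Since $k_i \mid n_o$, the outer encoder first maps $k_o$ logical qubits into $n_o$ physical qubits (consuming $c_o$ ebits), and then the $n_o$ outer physical qubits are partitioned into $\tfrac{n_o}{k_i}$ groups of $k_i$ qubits, each group fed as the logical input to an independent copy of the inner $[[n_i,k_i,d_i;c_i]]$ encoder. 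Counting physical qubits: each inner block outputs $n_i$ qubits, and there are $\tfrac{n_o}{k_i}$ of them, giving $\tfrac{n_o n_i}{k_i}$; the logical dimension is unchanged at $k_o$ because the inner encoders are information-preserving on their $k_i$-qubit inputs.

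Next I would count the ebits. Each of the $\tfrac{n_o}{k_i}$ inner encoder copies independently consumes $c_i$ preshared ebits, contributing $\tfrac{c_i n_o}{k_i}$ ebits in total; the outer encoder contributes its own $c_o$ ebits. The key point to argue here is that these ebit resources are \emph{additive} and do not interact: the outer code's ebit-halves and ancillas are, at the time of outer encoding, ordinary qubits from the point of view of the inner layer, so the inner encoders act only on the $n_o$ outer-physical qubits and leave Bob's halves of the outer ebits untouched. Equivalently, in the stabilizer-group picture of Sec. \ref{sec-prelim}, the simplectic (anticommuting) pairs of the concatenated code's group $\mathcal{S}'$ are exactly the $c_o$ pairs inherited from the outer code together with $c_i$ pairs from each of the $\tfrac{n_o}{k_i}$ inner copies, since concatenation embeds the inner generators on disjoint blocks of qubits and the outer generators on the (encoded) logical representatives, and these two families commute appropriately by construction. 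Hence $c = c_o + \tfrac{c_i n_o}{k_i}$, and likewise the ancilla count is $a = a_o + \tfrac{a_i n_o}{k_i}$, consistent with $n - k_o - c = (n_o - k_o - c_o) + \tfrac{n_o}{k_i}(n_i - k_i - c_i)$, which provides a useful consistency check.

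For the distance, I would simply invoke that the concatenated code inherits distance $d \ge \tfrac{d_o d_i}{k_i}$ exactly as in the non-EA conventional concatenation (Procedure 1): a weight-$w$ logical error on the concatenated code that is not corrected must, on each inner block it touches nontrivially, have weight at least $d_i/k_i$ in an appropriate sense after inner decoding (or be an undetected inner error of weight $\ge d_i$), and must act nontrivially on at least $d_o$ outer-symbol positions; the EA setting only enlarges the correctable-error characterization ($E_i^\dagger E_j \notin \mathcal{N}(\mathcal{S}') - \mathcal{S}'_I$) in a way that can only help, so the lower bound carries over verbatim.

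I expect the main obstacle to be making the ebit-additivity argument rigorous rather than merely plausible: one must be careful that feeding outer \emph{physical} qubits (some of which are outer ebit-halves) into inner encoders does not create new anticommutation relations or double-count entanglement, and that the maximally entangled pairs required by the inner copies are genuinely independent resources shared in advance. The cleanest route is to track the commutation relations of the generator set $\{\bar Z_i, \bar X_i\}$ through the concatenation map explicitly, showing the resulting group decomposes as $\mathcal{S}'_I$ of size $2^{a}$ and $\mathcal{S}'_S$ with exactly $c = c_o + \tfrac{c_i n_o}{k_i}$ anticommuting pairs; the block-diagonal (disjoint-support) structure of the inner encoders across the $\tfrac{n_o}{k_i}$ blocks makes this bookkeeping manageable.
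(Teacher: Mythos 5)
Your proof is correct and follows essentially the same route as the paper's: the paper's argument is exactly the additive count of $c_o$ ebits for the outer layer plus $c_i$ ebits for each of the $\tfrac{n_o}{k_i}$ inner blocks, with the length and distance inherited from Procedure 1 of the non-EA construction. The additional material you supply (the stabilizer-group bookkeeping of symplectic pairs and the ancilla consistency check) is a more careful justification of the additivity that the paper simply asserts, not a different method.
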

\begin{proof}
    In the case $k_i \mid n_o$, the outer encoding involves $k_o$ logical qubits into $n_o$ physical qubits and $c_o$ ebits. Next $n_o$ qubits will be partitioned into $\frac{n_o}{k_i}$ blocks with $n_i$ qubits along with $c_i$ ebits in each block. This implies the number of ebits used for the inner encoding is $\frac{c_i n_o}{k_i}$. Thus the total number of ebits used for the $[[\frac{n_on_i}{k_i}, k_o,d \geq \frac{d_od_i}{k_i},c]]$ CEAQECC is $c = c_o+\frac{c_i n_o}{k_i}$.
\end{proof}
\begin{figure}[htp]
    \centering
    \includegraphics[width=8cm]{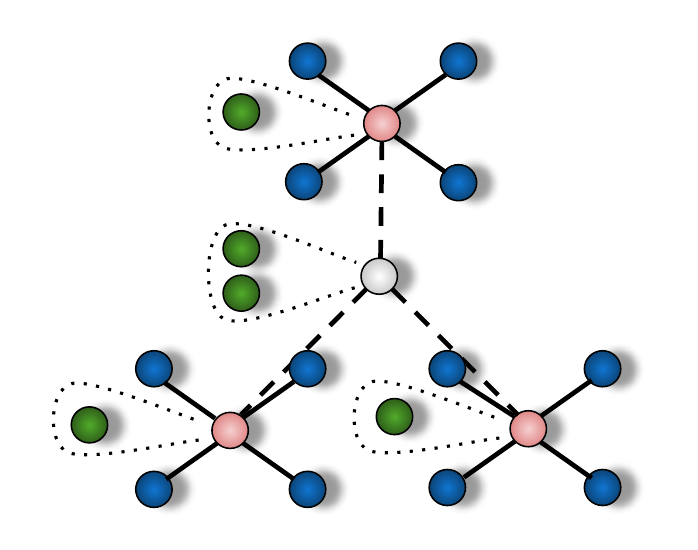}
    \caption{Graph structure of a $[[12,1;9;5]]$ CEAQECC with an outer $[[3,1,3;2]]$ code and inner $[[4,1,3;1]]$ code. The outer code encodes one logical qubit (white ball) into three physical qubits (light red balls) using two ebits (green balls). The inner code has four physical qubits (blue balls) encoded into each three of the blocks of the outer code, with one ebit (green balls) in each block.}
    \label{fig:graph-structure}
\end{figure}
\begin{lemma}
Given two EAQECCs $\mathcal{C}_o$ and $\mathcal{C}_i$,  with the parameters $[[n_o,k_o,d_o;c_o]]$ and $[[n_i,k_i,d_i;c_i]]$ respectively, if $k_i \nmid n_o$, then the concatenation $\mathcal{C}_o \rhd \mathcal{C}_i$
is a $[[n_on_i, k_ok_i, d \geq d_od_i,c]]$ code with $c=c_ok_i + c_in_o$.
\label{lem:notdivisible}
\end{lemma}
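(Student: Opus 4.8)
\textbf{Proof plan for Lemma~\ref{lem:notdivisible}.}
The plan is to mirror the logic of the proof of Lemma~\ref{lem:divisible}, tracking how many ebits are consumed at each stage of Procedure~2 of \cite{gaitan2008quantum}, and to carry the entanglement-assistance bookkeeping through the same two encoding steps that produce the $[[n_on_i,k_ok_i,d\ge d_od_i]]$ parameters in the non-EA case. First I would recall that, since $k_i\nmid n_o$, the construction runs $k_i$ parallel copies of the outer code $\mathcal{C}_o$: this encodes $k_ok_i$ logical qubits into $n_ok_i$ physical qubits, and since each invocation of $\mathcal{C}_o$ draws $c_o$ ebits, the outer layer consumes $c_ok_i$ ebits in total. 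Next I would observe that the $n_ok_i$ outer-encoded qubits are regrouped into $n_o$ blocks of $k_i$ qubits each, and each such block is fed into the inner code $\mathcal{C}_i$; since $\mathcal{C}_i$ uses $c_i$ ebits per block and there are $n_o$ blocks, the inner layer consumes $c_in_o$ ebits. Adding the two contributions gives $c = c_ok_i + c_in_o$, which is the claimed value.

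The distance and dimension claims I would simply inherit: the dimension is $k_ok_i$ because that many logical qubits enter the outer layer and none are consumed by the (purely syntactic) regrouping step, and the distance bound $d\ge d_od_i$ is exactly the statement proved for the underlying stabilizer concatenation in Procedure~2, which is unaffected by the presence of preshared entanglement since the ebit halves held by Bob are noiseless and do not enter the weight count on the channel side. The length $n_on_i$ follows because each of the $n_o$ inner blocks expands $k_i$ qubits into $n_i$ qubits, giving $n_o\cdot n_i$ transmitted qubits.

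The one point that genuinely needs care — and the step I expect to be the main obstacle — is justifying that the ebit counts of the two layers simply add, i.e., that the $c_ok_i$ ebits used in the outer encoding and the $c_in_o$ ebits used in the inner encoding are genuinely distinct resources and that no ``ebit of an ebit'' collapsing or cancellation occurs when an outer ebit-half happens to land inside a block that is then inner-encoded. To handle this I would appeal to the stabilizer/symplectic picture from the preliminaries: the outer code contributes $c_o$ anticommuting generator pairs per copy (acting on Alice's transmitted qubits plus her outer ebit halves), and after regrouping, each inner block is encoded by $\mathcal{C}_i$, contributing $c_i$ further anticommuting pairs whose support includes that block's inner ebit halves; since the inner ebit halves are fresh ancillary systems entangled with Bob and disjoint from all outer systems, the symplectic form on the combined generator set is block-diagonal across the two layers, so the number of anticommuting pairs — which by the preliminaries equals the number of ebits — is additive, yielding $c=c_ok_i+c_in_o$. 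An alternative, cleaner route is to invoke the decoding perspective already implicit in Lemma~\ref{lem:divisible}: Bob first performs inner decoding on each of the $n_o$ blocks using its $c_i$ ebit halves, recovering the $k_i$-qubit outer-level strings, then performs outer decoding on the $k_i$ copies using the $c_o$ ebit halves per copy; the total entanglement he must hold is the sum, namely $c_in_o+c_ok_i$.
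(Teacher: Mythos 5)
Your proposal is correct and follows essentially the same route as the paper: a direct count of the ebits consumed by the $k_i$ outer-code invocations ($c_ok_i$) and the $n_o$ inner-code blocks ($c_in_o$) in Procedure 2, with the length, dimension, and distance inherited from the non-EA concatenation. The additional justification you give for the additivity of the two ebit counts goes beyond what the paper states but does not change the argument.
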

\begin{proof}
    In the case $k_i \nmid n_o$, the outer encoding involves inputting a quantum string of length $k_ok_i$ qubits. In the next step, we need to encode each block of $k_o$ qubits into $n_o$ qubits and $c_o$ ebits using code $\mathcal{C}_{o}$. This implies the number of ebits used for the outer encoding is $c_ok_i$. The inner encoding involves $k_i$ qubits in each outer code block into $n_i$ qubits and $c_i$ ebits, resulting in $c_in_o$ ebits. Thus the total number of ebits used for the $[[n_on_i, k_ok_i, d \geq d_od_i;c]]$ CEAQECC is $c=c_ok_i + c_in_o$.
\end{proof}

The above concatenation procedures can be readily generalized going to the nonbinary case, including hybrid cases, where two $q$-ary codes of mutually different alphabet sizes are concatenated. Above, we mentioned a family of non-binary code concatenation, namely that defined by Forney \cite{forney1966concatenated}, which was recast as a special case of Procedure 1.
%Another such family of $q$-ary code concatenations is studied in Refs. \cite{gottesman2024surviving, grassl2009generalized}, given by $((n_o, K))_{q_o}  \rhd ((n_i, q_o))_{q_i} \equiv ((n_on_i, K))_{q_i}$. 
%This definition can also be considered as a special realization of our Procedure $1$, as the inner code rate is 1 (in units of qudit dimension $q_o$) while the outer code length is $n_o$ qudits of dimension $q_o$.}

As an illustration of Lemmas \ref{lem:divisible} and \ref{lem:notdivisible}, consider the case of two codes $[[7,3,3;1]]$ as $\mathcal{C}_0$ and $[[6,3,3;2]]$ as $\mathcal{C}_1$, derived from the $[[8,3,3]]$ code \cite{Grassl:codetables}. 
\textcolor{black}{Let us consider the concatenation $\mathcal{C}_1 \rhd \mathcal{C}_0$, where $k_0\equiv 3 \mid n_1 \equiv 6$. By Lemma \ref{lem:divisible} above, first encode $k_1$ qubits into $n_1$ qubits and $c_1\equiv 2$ ebits using $\mathcal{C}_1$. Next each of the $\frac{n_1}{k_0} \equiv 3$ blocks of size $k_0$ qubits are encoded into $n_0$ qubits using $\frac{c_0 n_1}{k_0}=2 \equiv \beta^{\prime}$ by applying $\mathcal{C}_0$. This yields $[[\frac{n_1n_0}{k_0},k_1,d\ge \frac{d_1d_0}{k_0},c_1+\beta^{\prime}]] = [[14,3,3;4]]$.} \textcolor{black}{Note that if $k_i=1$, then $k_i \mid n_o$ quite generally. Now consider the concatenation $\mathcal{C}_0 \rhd \mathcal{C}_1$, where $k_1 \equiv 3 \nmid n_0 \equiv 7$. By Lemma \ref{lem:notdivisible}, first we encode $k_0k_1 \equiv 9$ qubits into $n_0k_1 = 21$ using $c_0k_1 = 3 \equiv \beta$ ebits by applying $\mathcal{C}_0$ on $k_1$ blocks of length $n_0$. Next partition the resulting string into $n_0$ blocks of length $k_1$. Each of these blocks is encoded into $n_1 \equiv 6$ qubits and $c_1\equiv2$ ebits using $\mathcal{C}_1$. This yields $[[n_0n_1,k_0k_1,d\ge d_0d_1,\beta+n_0c_1]] = [[42,9,9;17]]$.}
 
Here let us construct CEAQECCs with the outer code having a higher number of ebits and the inner code having a lower number of ebits and self-concatenated codes, for example $[[3,1,3;2]]$ (or $[[4,1,3;1]]$) as the outer code and  $[[5,1,3]]$ as an inner code results in $[[15, 1, 9; 2]]$ (or $[[20,1,9;1]]$) code. In a similar fashion, concatenation of the outer $[[3,1,3;2]]$ code and inner $[[4,1,3;1]]$ code results in $[[12,1,9;5]]$. 

\textcolor{black}{Graph structures of CEAQECC's are helpful in providing a pictographic representation of code concatenation and in elucidating the ebits necessitated thereby. For example, the graph structure in Fig. \ref{fig:graph-structure} depicts the concatenation $[[3,1,3;2]] \rhd [[4,1,3;1]]$ resulting in
a $[[12,1;9;5]]$ code. First, the outer layer involves one logical qubit encoded into three qubits with the aid of two ebits. This is demarcated by the dotted curve. In the inner layer, each of the three resultant qubits of the outer layer is encoded into four qubits with the aid of an ebit. The shared entanglement in this case is again depicted within the dotted curve. This allows us to directly read off the total number ebits used as 5. Here the inner code encodes single logical qubit, making it is special case of Lemma \ref{lem:divisible}. A straightforward, though more elaborate, graphic structure can be given to elucidate Lemma \ref{lem:notdivisible}.} The self-concatenation of $[[3,1,3;2]]$ and $[[4,1,3;1]]$ results in the $[[9,1,9;8]]$ and $[[16,1,9;5]]$ codes, respectively. These CEAQECCs with their metric of performances are summarized in Table \ref{table:1}. Below we discuss ways to assess the performance of CEAQECCs.

\begin{table}[t!]
\centering
\setlength{\tabcolsep}{6pt}
\renewcommand{\arraystretch}{1}
\begin{tabular}{c c c c c }
 \hline
  CEAQECCs  & $r$ & $r_e$ & $r_n$ & $\delta$  \\ [0.5ex] 
   \hline\hline
     $[[9,1,9;8]]$    & $0.1111$ & $0.8888$ & $-0.7777$ & $1$\\
     $[[12,1,9;5]]$   & $0.0833$ & $0.4166$ & $-0.3333$  & $0.75$ \\
  $[[15,1,9;2]]$   & $0.0666$ & $0.1333$ & $-0.6666$  & $0.6$ \\
   $[[16,1,9;5]]$    & $0.0625$ & $0.3125$ & $-0.25$  & $0.5625$\\
  $[[20,1,9;1]]$   & $0.05$ & $0.05$ & $0$  & $0.45$ \\[1ex]
 \hline
\end{tabular}
\caption{CEAQECCs with code rates, entanglement-assistance rates, net rates, and relative distances.} 
\label{table:1}
\end{table}

\section{Order dependence of the performance of CEAQECCs}\label{sec-optimizing_order_less_ebits}
Among the parameters that characterize the performance of CEAQECCs, three parameters are the three following parameters: the number of ebits $c$, the logical error probability $p_L$, and its pseudothreshold $p_{\theta}$; the latter two are applicable also to general QECCs. Ideally, we want low values of $c$ and $p_L$ and a high value of $p_{\theta}$. Moreover, in the context of EAQECCs, there is the added burden that the shared ebits are noiseless \cite{fan2022entanglementassisted}, which in turn requires entanglement distillation protocols \cite{bennett1996purification, bennett1996mixed}. Thus, it is of interest to study how the order of concatenation can be optimized \cite{chamberland2017error, dash2023concatenating} in order to minimize the number of ebits required to perform efficient error correction. 

\begin{theorem}
Consider the concatenation of an \mbox{$[[n-c_1,k;c_1]]$} EAQECC $\mathcal{C}_1$ and an $[[n-c_2,k;c_2]]$ EAQECC $\mathcal{C}_2$, where $0 \leq c_1 < c_2$ and both are derived from a standard  $[[n,k]]$ stabilizer code. If either (a) $k\mid n-c_1$ and $k\mid n-c_2$ or (b) $k\nmid n-c_1$ and $k\nmid n-c_2$, then the concatenated code \mbox{$\mathcal{C}_2 \rhd \mathcal{C}_1$} requires fewer ebits than the concatenated code $\mathcal{C}_1 \rhd \mathcal{C}_2$.  
    \label{thm:lessebits2}
\end{theorem}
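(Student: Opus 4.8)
The plan is to compute the ebit count of each of the two concatenated codes $\mathcal{C}_1 \rhd \mathcal{C}_2$ and $\mathcal{C}_2 \rhd \mathcal{C}_1$ using Lemma~\ref{lem:divisible} (in case (a)) or Lemma~\ref{lem:notdivisible} (in case (b)), and then compare. First I would observe that since both $\mathcal{C}_1$ and $\mathcal{C}_2$ are derived from the same $[[n,k]]$ stabilizer code, they have the common logical dimension $k$ while their lengths are $n_1 = n - c_1$ and $n_2 = n - c_2$, with ebit counts $c_1$ and $c_2$ respectively. Because the divisibility condition in each case depends only on $k$ and on $n - c_j$ (which is the code length of $\mathcal{C}_j$), both orders of concatenation fall under the same lemma, so the comparison is well posed.

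For case (a), where $k \mid n - c_1$ and $k \mid n - c_2$, I would apply Lemma~\ref{lem:divisible} twice. Taking $\mathcal{C}_2$ as outer and $\mathcal{C}_1$ as inner gives ebit count $c(\mathcal{C}_2 \rhd \mathcal{C}_1) = c_2 + c_1 (n - c_2)/k$, while the reverse order gives $c(\mathcal{C}_1 \rhd \mathcal{C}_2) = c_1 + c_2 (n - c_1)/k$. The difference is
\begin{align}
c(\mathcal{C}_1 \rhd \mathcal{C}_2) - c(\mathcal{C}_2 \rhd \mathcal{C}_1) &= (c_1 - c_2) + \frac{c_2(n - c_1) - c_1(n - c_2)}{k} \nonumber \\
&= (c_1 - c_2) + \frac{(c_2 - c_1)n}{k} \nonumber \\
&= (c_2 - c_1)\left(\frac{n}{k} - 1\right),
\end{align}
which is strictly positive since $c_2 > c_1$ and $n > k$ (the latter because a nontrivial stabilizer code has $n - k \ge 1$; the edge case $n = k$ gives a trivial code with no concatenation content). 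Hence $\mathcal{C}_2 \rhd \mathcal{C}_1$ uses strictly fewer ebits.

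For case (b), where $k \nmid n - c_1$ and $k \nmid n - c_2$, I would apply Lemma~\ref{lem:notdivisible} twice. With $\mathcal{C}_2$ outer and $\mathcal{C}_1$ inner, $c(\mathcal{C}_2 \rhd \mathcal{C}_1) = c_2 k + c_1(n - c_2)$, and reversing, $c(\mathcal{C}_1 \rhd \mathcal{C}_2) = c_1 k + c_2(n - c_1)$. The difference is
\begin{align}
c(\mathcal{C}_1 \rhd \mathcal{C}_2) - c(\mathcal{C}_2 \rhd \mathcal{C}_1) &= k(c_1 - c_2) + c_2(n - c_1) - c_1(n - c_2) \nonumber \\
&= k(c_1 - c_2) + (c_2 - c_1)n \nonumber \\
&= (c_2 - c_1)(n - k) > 0,
\end{align}
again strictly positive. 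So in both cases the code with the larger-ebit component as outer code requires fewer ebits, which is the claim. I do not expect a genuine obstacle here: the result is essentially an algebraic rearrangement once the correct lemma is invoked; the only point demanding a little care is confirming that the two orders of concatenation are governed by the same lemma (so that the hypothesis splits cleanly into cases (a) and (b)) and noting the harmless exclusion of the degenerate $n = k$ case. It may also be worth remarking that the conclusion is monotone: the penalty for choosing the wrong order scales with $(c_2 - c_1)$ and with the redundancy $n/k - 1$ or $n - k$ of the underlying stabilizer code.
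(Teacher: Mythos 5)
Your proof is correct and follows essentially the same route as the paper's: apply Lemma~\ref{lem:divisible} or Lemma~\ref{lem:notdivisible} to both orders and compute the difference $(c_2-c_1)(\frac{n}{k}-1)$ or $(c_2-c_1)(n-k)$, respectively. Your added remark that the divisibility hypothesis applies symmetrically to both orders (so each case is governed by a single lemma) is a sensible clarification that the paper leaves implicit.
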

\begin{proof}
Let $c$ ($c^{\prime}$) denote the ebits of the concatenated code $\mathcal{C}_1 \rhd \mathcal{C}_2$ ($\mathcal{C}_2 \rhd \mathcal{C}_1$).
\begin{description}
    \item[$k \mid n-c_2, k \mid n-c_1$] By Lemma \ref{lem:divisible} we have $c=c_1+\frac{c_2(n-c_1)}{k}$ and $c^{\prime} = c_2+\frac{c_1(n-c_2)}{k}$. It follows that $c-c^{\prime} = (\frac{n}{k}-1)(c_2-c_1) > 0$.
    \item[$k \nmid n-c_2, k \nmid n-c_1$] By Lemma \ref{lem:notdivisible} we have $c=c_1k+c_2n - c_1c_2$ and $c^{\prime} = c_2k+c_1n - c_1c_2$. It follows that $c-c^{\prime} = (n-k)(c_2-c_1) > 0$.
\end{description}
Note that the ebit difference is the same in both cases apart from a normalization with respect to $k$.
\end{proof}
For the other two cases, i.e., $k_1 \nmid n_2$ and $k_2 \mid n_1$, and $k_1 \mid n_2$ and $k_2 \nmid n_1$, the code \mbox{$\mathcal{C}_2 \rhd \mathcal{C}_1$} may not necessarily use fewer ebits than the concatenated code $\mathcal{C}_1 \rhd \mathcal{C}_2$. The following examples illustrate these cases, respectively. Consider $[[4,2,2;1]]$ as $\mathcal{C}_1$ and $[[3,2,2;2]]$ as $\mathcal{C}_2$, both derived from $[[5,2,2;0]]$ \cite{Grassl:codetables}. Here the numbers of ebits required by the concatenated codes $\mathcal{C}_1 \rhd \mathcal{C}_2$ and $\mathcal{C}_2 \rhd \mathcal{C}_1$ are $c=5$ and $c^{\prime}=7$, respectively. For the last case, consider $[[9,2,4;1]]$ as $\mathcal{C}_1$ and $[[8,2,4;2]]$ as $\mathcal{C}_2$, both derived from $[[10,2,4;0]]$ \cite{Grassl:codetables}. Here the numbers of ebits used by the concatenated code $\mathcal{C}_1 \rhd \mathcal{C}_2$ and $\mathcal{C}_2 \rhd \mathcal{C}_1$ are $c=20$ and $c^{\prime}=6$, respectively.

The above theorem indicates a family of codes,  where the number of ebits required under the concatenation of a pair of members is in general order dependent. By contrast, the following theorems present a family where the number of ebits required is order independent.

\begin{theorem}
    Given two EAQECCs $[[n,k;c]]$ and \mbox{$[[n+m,k+m;c]]$} ($m \in \mathbb{Z}_{\ge 0}$), the number of ebits required under the concatenation with $k_{\rm inner} \nmid n_{\rm outer}$ is independent of ordering.
      \label{thm:switch_invariant_same_c}
\end{theorem}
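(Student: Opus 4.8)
The plan is to apply Lemma~\ref{lem:notdivisible} in both orders of concatenation and verify that the resulting ebit counts coincide. Label the two codes so that $\mathcal{C}_1 = [[n,k;c]]$ and $\mathcal{C}_2 = [[n+m,k+m;c]]$. The hypothesis is that in each ordering the inner code rate does not divide the outer code length, so Lemma~\ref{lem:notdivisible} governs both concatenations and there is no need to split into cases by divisibility.

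First I would compute the ebit count for $\mathcal{C}_1 \rhd \mathcal{C}_2$: here the outer code is $\mathcal{C}_1$ with $(n_o,k_o,c_o)=(n,k,c)$ and the inner code is $\mathcal{C}_2$ with $(n_i,k_i,c_i)=(n+m,k+m,c)$, so Lemma~\ref{lem:notdivisible} gives
\begin{align}
c_{12} = c_o k_i + c_i n_o = c(k+m) + c\,n = c(n+k+m).
\label{eq:c12switch}
\end{align}
Next I would compute the ebit count for the reversed order $\mathcal{C}_2 \rhd \mathcal{C}_1$: now the outer code is $\mathcal{C}_2$ with $(n_o,k_o,c_o)=(n+m,k+m,c)$ and the inner code is $\mathcal{C}_1$ with $(n_i,k_i,c_i)=(n,k,c)$, so Lemma~\ref{lem:notdivisible} gives
\begin{align}
c_{21} = c_o k_i + c_i n_o = c\,k + c(n+m) = c(n+k+m).
\label{eq:c21switch}
\end{align}
Comparing \eqref{eq:c12switch} and \eqref{eq:c21switch} shows $c_{12}=c_{21}$, which is the claim. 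The key structural reason is that the Lemma~\ref{lem:notdivisible} formula $c_o k_i + c_i n_o$ is, when $c_o=c_i=c$, simply $c(k_i+n_o)$, and interchanging outer and inner swaps $(n_o,k_i)$ with $(n_i,k_o)=(n+m,k)$, whose sum $n+m+k$ equals $k_i+n_o = (k+m)+n$ — the shared value $c$ pulls out and the relation $n_o - k_i = n_i - k_o$ (both equal $n-k$) is exactly the symmetry that makes the two sums agree.

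There is no real obstacle here beyond bookkeeping; the only point requiring a word of care is that the hypothesis $k_{\rm inner}\nmid n_{\rm outer}$ must hold in \emph{both} orderings for Lemma~\ref{lem:notdivisible} to be the applicable rule throughout, and this is precisely what the theorem assumes, so we may take it as given. I would also remark that the distance and logical-dimension parameters of the two concatenated codes need not coincide — only the ebit count is order independent — and that the same computation shows more generally that two EAQECCs $[[n_1,k_1;c]]$ and $[[n_2,k_2;c]]$ with a common ebit number $c$ yield order-independent ebit counts under Lemma~\ref{lem:notdivisible} whenever $n_1+k_2 = n_2+k_1$, i.e. $n_1-k_1 = n_2-k_2$; the present statement is the specialization $n_2=n_1+m$, $k_2=k_1+m$.
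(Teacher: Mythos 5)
Your proposal is correct and follows essentially the same route as the paper: both orderings are evaluated with the formula of Lemma~\ref{lem:notdivisible}, yielding $c(n+k+m)$ in each case. Your closing observation that the underlying symmetry is $n_1-k_1=n_2-k_2$ (with equal $c$) is a valid and slightly more general restatement of why the two counts agree.
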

\begin{proof}
The number of ebits required under concatenation with either ordering is given by $c(k+m+n)$.
\end{proof}
As an example concatenation of EAQECCs $[[3,2,2;1]]$ \cite{Grassl:codetables} and $[[5,4,2;1]]$ \cite{luo2023constructing} by both ordering give a $[[15,8,4;7]]$ CEAQECC. Below we discuss a similar result. They have the following property.
\begin{theorem}
    With the existence of an $[[n,k,d;c]]$ code, extended codes $[[n+1,k;c+1]]$ and $[[n,k-1;c+1]]$ also exist \cite{lai2013dualities}. The number of ebits required under the concatenation of the extended codes is independent of ordering, provided $k_{\rm inner} \nmid n_{\rm outer}$.
      \label{thm:switch_invariant_same_c+1}
\end{theorem}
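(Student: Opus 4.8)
The plan is to read off the ebit counts directly from Lemma \ref{lem:notdivisible} in each of the two orderings and verify that they coincide. First I would record the parameters: by \cite{lai2013dualities} the two extended codes $\mathcal{C}_1 = [[n+1,k,d_1;c+1]]$ and $\mathcal{C}_2 = [[n,k-1,d_2;c+1]]$ both exist, and the crucial feature is that they carry the \emph{same} number $c+1$ of ebits. Since by hypothesis the inner code rate does not divide the outer block length in the ordering under consideration, Lemma \ref{lem:notdivisible} applies and gives the ebit count of $\mathcal{C}_o \rhd \mathcal{C}_i$ as $c_o k_i + c_i n_o$.

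The key step is the observation that when $c_o = c_i = c+1$ this formula collapses to $(c+1)(k_i + n_o)$, so the ebit count depends on the pair of codes only through the combination (inner logical qubits) $+$ (outer physical qubits). For $\mathcal{C}_1 \rhd \mathcal{C}_2$ this combination is $k_2 + n_1 = (k-1) + (n+1) = n+k$, while for $\mathcal{C}_2 \rhd \mathcal{C}_1$ it is $k_1 + n_2 = k + n = n+k$. Hence both orderings require $(c+1)(n+k)$ ebits, which establishes the claim. This is the same mechanism that underlies Theorem \ref{thm:switch_invariant_same_c}, where the shift $(n,k) \mapsto (n+m,k+m)$ likewise leaves the quantity $k_i + n_o$ invariant under swapping inner and outer roles.

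I do not anticipate a genuine obstacle here: the argument is essentially a one-line substitution into Lemma \ref{lem:notdivisible} followed by trivial arithmetic. The only point that needs care is the scope of the hypothesis $k_{\rm inner} \nmid n_{\rm outer}$. For the comparison to be between two Procedure-2 concatenations one needs this divisibility failure to hold in \emph{both} orderings, i.e. $k-1 \nmid n+1$ and $k \nmid n$ simultaneously; if it holds in only one ordering, the other ordering instead falls under Procedure 1 (Lemma \ref{lem:divisible}) and produces a different count, so I would state this restriction explicitly in the hypothesis. To round out the presentation I would also supply an illustrative example in the spirit of the $[[15,8,4;7]]$ example following Theorem \ref{thm:switch_invariant_same_c}, starting from some concrete $[[n,k,d;c]]$ code, forming its two extensions, and exhibiting that both orders of concatenation yield a CEAQECC with $(c+1)(n+k)$ ebits.
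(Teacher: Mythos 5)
Your proposal is correct and follows exactly the route the paper intends: substituting $c_o=c_i=c+1$ into the Procedure-2 ebit formula $c_ok_i+c_in_o$ from Lemma \ref{lem:notdivisible} and checking that $k_i+n_o=n+k$ in both orderings, yielding the paper's stated count $(n+k)(c+1)$. Your remark that the nondivisibility hypothesis must hold in both orderings is a sensible clarification of the theorem's (tersely stated) hypothesis, not a deviation from the paper's argument.
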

\begin{proof}
The number of ebits required under concatenation with either ordering is given by $(n+k)(c+1)$.
\end{proof}
The following result is for the family of EA repetition codes.
\begin{theorem}
    For the family of $[[n,1;n-1]]$ EAQECCs ($n \in \mathbb{Z}^+$ and $Z\ge3$), the number of ebits required under the concatenation of any pair of members is independent of ordering.
      \label{thm:switch_invariant}
\end{theorem}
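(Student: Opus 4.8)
The plan is to apply Lemmas \ref{lem:divisible} and \ref{lem:notdivisible} to a generic pair of members of the family, say $[[n,1;n-1]]$ and $[[m,1;m-1]]$ with $n,m \ge 3$, and observe that the defining property $k_i = 1$ forces us into the divisible case for either ordering. First I would note that since the inner code always has $k_i = 1$, the divisibility condition $k_i \mid n_o$ is automatically satisfied regardless of which member plays the role of outer code; hence only Lemma \ref{lem:divisible} is ever relevant, and Procedure 2 never arises. This removes any case distinction and is the conceptual crux of the argument.

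Next I would substitute into the ebit formula $c = c_o + \frac{c_i n_o}{k_i}$ from Lemma \ref{lem:divisible}. Taking $[[n,1;n-1]]$ as outer and $[[m,1;m-1]]$ as inner gives $c_o = n-1$, $c_i = m-1$, $n_o = n$, $k_i = 1$, so $c = (n-1) + (m-1)n = nm - 1$. Swapping the roles gives $c' = (m-1) + (n-1)m = nm - 1$ as well. Since $c = c' = nm-1$, the number of ebits is order independent, which is exactly the claim. (For completeness one can also check the other resultant parameters: the length is $n_o n_i / k_i = nm$ either way, and the number of logical qubits is $k_o = 1$ either way, so in fact the two concatenated codes share the parameters $[[nm, 1, d \ge nm; nm-1]]$, consistent with the maximal-entanglement form $c = n_{\rm tot} - k_{\rm tot}$.)

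The main obstacle here is essentially bookkeeping rather than a genuine difficulty: one must be careful that Lemma \ref{lem:divisible}'s hypothesis $k_i \mid n_o$ is the correct one to invoke and that it indeed holds trivially when $k_i = 1$ — a point already flagged in the text following Lemma \ref{lem:notdivisible} and again in the discussion of Fig. \ref{fig:graph-structure} ("if $k_i = 1$, then $k_i \mid n_o$ quite generally"). Given that, the computation collapses immediately. I would therefore present the proof in two short lines: invoke $k_i = 1 \Rightarrow k_i \mid n_o$ to reduce to Lemma \ref{lem:divisible}, then compute $c = nm - 1 = c'$ symmetrically. One stylistic choice is whether to phrase it for two distinct members or allow self-concatenation ($n = m$); the formula $c = nm-1$ covers both, so no separate treatment is needed.
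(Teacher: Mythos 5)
Your proof is correct and follows essentially the same route as the paper's: both reduce to Lemma \ref{lem:divisible} (since $k_i=1$ makes $k_i \mid n_o$ automatic) and compute $c = (n-1) + (m-1)n = nm-1$, which is symmetric in $n$ and $m$. Your explicit remark that Procedure 2 never arises, and the consistency check against the maximal-entanglement form $c = n_{\rm tot} - k_{\rm tot}$, are welcome additions but do not change the argument.
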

\begin{proof}
Consider two codes in this family, given by EAQECCs $[[n,1;n-1]]$ and \mbox{$[[m,1;m-1]]$} where $m \neq n$. The number of ebits required for concatenating the former with the latter is given by
$(n-1)\times 1 +(m-1)\times n = nm-1$, which is invariant when $n$ and $m$ are interchanged.
\end{proof}

Note that both $[[n,1,n;n-1]]$ \cite{lai2013entanglement} and $[[n,1,n-1;n-1]]$ \cite{lai2013dualities} EA repetition codes for odd and even $n$, respectively, are covered under Theorem \ref{thm:switch_invariant}. The term ``repetition'' here indicates that these two families of codes are constructed using the parity check matrix of classical $[n,1,n]$ and $[n-1,1,n-1]$ repetition codes, respectively \cite{lai2013entanglement, lai2013dualities}. 

Now we construct different sets of CEAQECCs by concatenating the $[[5,1,3]]$ or $[[3,1,3;2]]$ repetition code ($[[3,1,3;2]]_{R}$) with $[[4,1,3;1]]$ \cite{brun2006correcting}, where the latter two codes are not derivatives of the $[[5,1,3]]$ code. We check how the order of concatenation affects the number of ebits, logical error probability, and pseudothreshold for different sets of CEAQECCs.

Consider the probability of an error on single qubit is $p$. The logical error probability of the $[[5,1,3]]$ code is
\begin{align}
    p_L^{[[5,1,3]]}=1 - (1 - p)^5 - 5 (1 - p)^4 p.
    \label{eq: logicalrate_five}
\end{align}
The stabilizer for the $[[3,1,3;2]]_{R}$ code \cite{lai2013entanglement} is
\begin{align}
    \mathcal{S}^{[[3,1,3;2]]_{R}}&=
    \begin{matrix}
 Z & Z & I  & \vline & I & Z \\
 I & X & X  & \vline & I & X \\
 I & Z & Z  & \vline & Z & I \\
 X & X & I  & \vline & X & I  \\
 \label{eq: stabilizers_3_1_3_2_rep}
\end{matrix},
\end{align}
where the vertical bar indicates a partition between Alice and Bob's possession of qubits. \textcolor{black}{As this code is constructed with the assumption that shared ebits of Bob's part are noiseless, the six errors $X_4$, $Y_4$, $Z_4$, $X_5$, $Y_5$, and $Z_5$ are absent by design. Like the [[5,1,3]] code, which is perfect (in that it saturates the Hamming bound), the present [[3,1,3;2]]$_R$ code is also perfect [in that it satisfies the EA Hamming bound (\ref{eq:eahb})]. Therefore, the syndrome space freed up by the noiseless assumption can be exploited to correct six weight-2 errors on the first three qubits.} Specifically, the set of errors correctable by this code is \cite{lai2012entanglement-assisted}
\begin{align}
    \mathcal{E}^{[[3,1,3;2]]_{R}}&= \langle I,X^1,X^2,X^3,Z^1,Z^2,Z^3,Y^1,Y^2,Y^3,\nonumber\\
    &X^1Z^2,X^1Z^3,Z^1X^2,X^2Z^3, Z^1X^3,Z^2X^3 \rangle.
    \label{eq:correct3ea}
\end{align}
\textcolor{black}{Each of the above errors produces a distinct syndrome of the $[[3,1,3;2]]_R$ code and can thus be corrected by Bob provided his qubits remain noiseless.
The logical error probability then follows as the probability that none of the errors in the list Eq. (\ref{eq:correct3ea}) occurs:}
\begin{align}
    p_L^{[[3,1,3;2]]_{R}}=1 - (1 - p)^3 - 3 (1 - p)^2 p - \frac{2}{9} (1 - p) p^2.
    \label{eq: logicalrate_rep}
\end{align}
The $[[4,1,3;1]]$ code is constructed from the classical $[4,2,3]$ quaternary code, and it can correct arbitrary single-qubit errors in the first four qubits \cite{brun2006correcting}. Thus, the logical error probability of the $[[4,1,3;1]]$ code is
\begin{align}
    p_L^{[[4,1,3;1]]}=1 - (1 - p)^4 - 4 (1 - p)^3 p.
    \label{eq: logicalrate_four}
\end{align}

If $\mathcal{C}_{a}$ and $\mathcal{C}_{b}$ are two quantum error-correcting codes with logical error probabilities $p_L^{\mathcal{C}_{a}}(p)$ and $p_L^{\mathcal{C}_{b}}(p)$, respectively, then the logical error probability of the concatenated code $\mathcal{C}_{a} \rhd \mathcal{C}_{b}$  is expressed as \cite{gaitan2008quantum, dash2023concatenating}
\begin{align}
  p_L^{\mathcal{C}_{a} \rhd \mathcal{C}_{b}}(p) = p_L^{\mathcal{C}_{a}}[p_L^{\mathcal{C}_{b}}(p)].
  \label{eq: cc_logicalrate}
\end{align}
By way of example, consider concatenating a $[[5,1,3]]$ code with a $[[3,1,3;2]]_{R}$ or $[[4,1,3;1]]$ code. Here the subscript $R$ denotes the repetition code, in contrast to the Bowen code, derived from the $[[5,1,3]]$ code, indicated by $[[3,1,3;2]]_{B}$. 

With a slight abuse of notation, we have concatenated codes $[[5,1,3]] \rhd [[3,1,3;2]]_{R} \equiv [[15,1,9;10]]_{(R)}$ and $[[3,1,3;2]]_{R} \rhd [[5,1,3]] \equiv [[15,1,9;2]]_{(R)}$.  The logical error probabilities for these two CEAQECCs are evaluated using Eq. (\ref{eq: cc_logicalrate}) and are depicted in Fig. \ref{fig:ler15102}. Here notice that $p_L$ is dependent on the concatenation ordering. \textcolor{black}{We recollect that the error pseudothreshold $p_{\theta}$ is defined as the maximum single-qubit error $p$ such that encoding remains effective, i.e., $\max_p p_L(p)\le p$. It may be noted that $p_{\theta}$ is insensitive to recursive concatenation.}
A similar dependence is borne out in the pseudo-threshold $p_{\theta}$, which for $[[15,1,9;10]]_{(R)}$ and  $[[15,1,9;2]]_{(R)}$ is found to be $0.2441$ and $0.2284$, respectively. Interestingly, no such dependence is found in either $p_L$ or $p_{\theta}$ when $[[3,1,3;2]]_R$ above is replaced by $[[3,1,3;2]]_{B}$, showing that the parameters $n$,$k$,$d$, and $c$ alone do not determine the properties of the concatenated code. As another example, for $[[5,1,3]] \rhd [[4,1,3;1]] \equiv [[20,1,9;5]]$ and $[[4,1,3;1]] \rhd [[5,1,3]] \equiv [[20,1,9;1]]$ CEAQECCs, $p_{\theta}$ is found to be $0.1877$ and $0.1622$, respectively.

%\begin{widetext}
\begin{figure}[t]
         \centering
         \includegraphics[width=8cm]{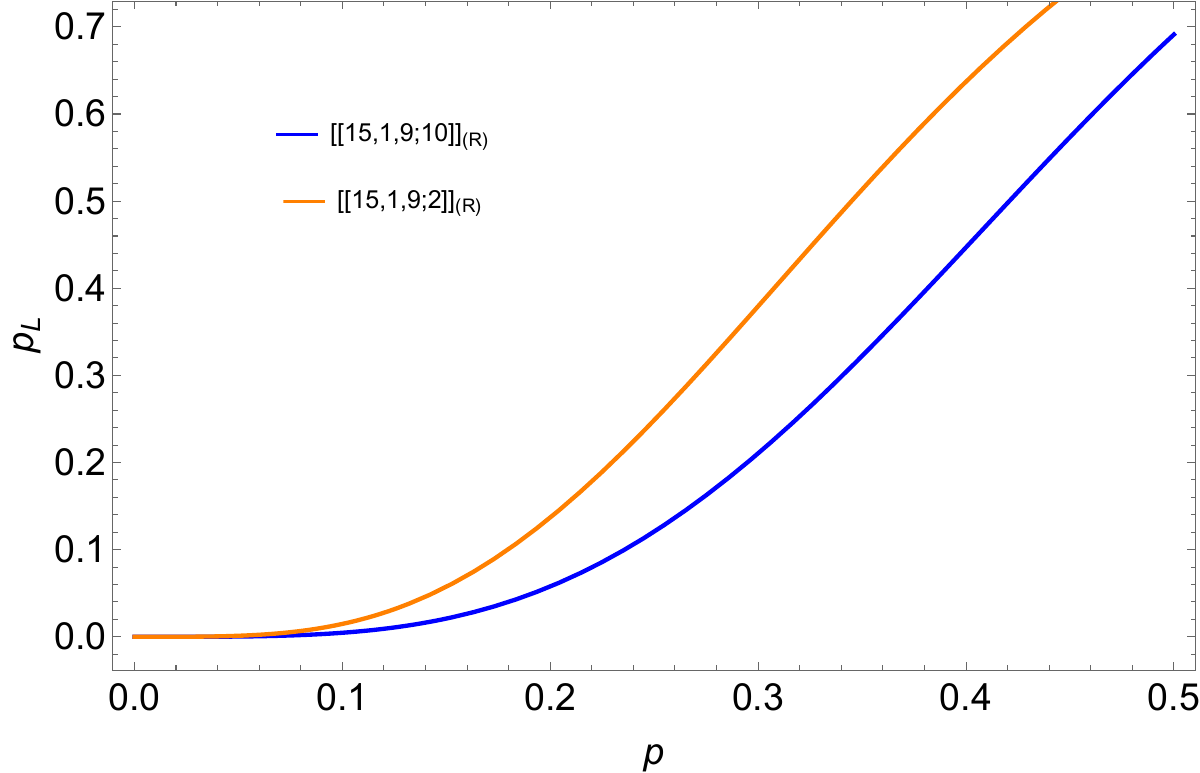}
         \caption{Logical error probabilities for the concatenation of $[[5,1,3]]$ and $[[3,1,3;2]]$ repetition codes.}
         \label{fig:ler15102}
    \end{figure}
%\end{widetext}

\section{Saturation or violation of various bounds by specific CEAQECC classes}\label{sec-violation_or_saturation_ceaqeccs}

From the concatenation of QECCs, we know that if a standard stabilizer code saturates the quantum Singleton bound, this does not imply that its self-concatenated code also saturates that bound. A simple illustration is that the $[[5,1,3]]$ code saturates the quantum Singleton bound while its twofold self-concatenated version, a $[[25, 1, 9]]$ code, does not saturate it. However, certain families of EA MDS codes can be closed under concatenation. One such case is proved below.
 \begin{theorem}
    Any $l$-fold concatenation of codes taken from the family of the non-degenerate $[[n,1,n;n-1]]$ EA repetition codes for $n$ odd saturates the trio of EA Singleton, EA Griesmer and linear EA Plotkin bounds, where $l \in \mathbb N_{> 0}$. 
    \label{th: l-fold}
\end{theorem}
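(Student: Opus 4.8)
The plan is to first compute the exact parameters of the $l$-fold concatenated code and then invoke the fact, recorded in the preliminaries, that for $k=1$ codes the EA Singleton, EA Griesmer and linear EA Plotkin bounds all coincide. Write the chosen members as $\mathcal{C}_j=[[n_j,1,n_j;n_j-1]]$ with each $n_j$ odd, hence $n_j\ge 3$, and form $\mathcal{C}_1\rhd\mathcal{C}_2\rhd\cdots\rhd\mathcal{C}_l$. Since every component has $k=1$, the divisibility hypothesis $k_{\rm inner}\mid n_{\rm outer}$ of Lemma~\ref{lem:divisible} holds automatically at every stage, so I would apply that lemma $l-1$ times and prove by induction on $l$ that the result is an $[[N,1,D;N-1]]$ code with $N=\prod_{j=1}^{l}n_j$ (odd) and $D\ge\prod_{j=1}^{l}n_j=N$: at each stage the block length is multiplied by the new $n_j$, the distance lower bound is multiplied by $d_j=n_j$, and the ebit count updates as $c\mapsto c+(n_j-1)\cdot(\text{current block length})$, which telescopes to $N-1$. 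By Theorem~\ref{thm:switch_invariant} the ebit count is independent of the order in which the members are combined, so fixing the nesting above costs no generality.

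Next I would pin down $D=N$. Since $N\ge 3$ we have $D\ge N>\frac{N}{2}+1$. If the concatenated code is nondegenerate, the EA Singleton bound~(\ref{eq:singleton}) gives $1\le(N-1)+N-2D+2$, i.e.\ $D\le N$. If it is degenerate, the applicable bound is~(\ref{eq:qsb_d_geq}), whose right-hand side at $k=1$, $n=N$, $c=N-1$ equals $\frac{(N-D+1)(2D-3)}{3D-3-N}$; for any $D\ge N+1$ this is $\le 0<1$, again forcing $D\le N$. Either way $D=N$, so the $l$-fold concatenation is exactly the $[[N,1,N;N-1]]$ code with $N$ odd. (Equivalently, $D=N$ can be read off by noting that each component is the EAQECC induced from the classical quaternary repetition code $[n_j,1,n_j]_4$, that the classical concatenation of quaternary repetition codes is again a quaternary repetition code $[N,1,N]_4$, and that the induced EAQECC inherits the classical distance $N$.)

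Finally, for saturation: because $k=1$, the EA Singleton, EA Griesmer and linear EA Plotkin bounds all read $d\le\frac{n+c+k}{2}$, whose right-hand side here is $\frac{N+(N-1)+1}{2}=N=D$, so all three are saturated. To be scrupulous about which form of the EA Singleton bound applies, note that for $l\ge 2$ the concatenated code is degenerate: each of the $\prod_{j<l}n_j$ copies of the stabilizer of the innermost component $\mathcal{C}_l$ supplies a stabilizer element of weight at most $2n_l-1<N$. Hence Theorem~\ref{thm:3in1} cannot be invoked verbatim for $l\ge 2$; instead one substitutes $D=N$ directly into the relevant bounds, obtaining $1\le 1$ in~(\ref{eq:qsb_d_geq}) and $N\ge N$ in each of~(\ref{eq:eagriesmer}) and~(\ref{eq:plotkinboundlinearqary}) (with $q=2$), so all three appropriate forms are saturated; the case $l=1$ is immediate from Theorem~\ref{thm:3in1} since the component is nondegenerate with $d=\frac{n+1+c}{2}$. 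I expect this degeneracy bookkeeping to be the main obstacle — the concatenation has $d>\frac{n}{2}+1$ and is degenerate, so one must either run the case split just described or justify the identification with the EAQECC induced by the classical quaternary repetition code of length $N$.
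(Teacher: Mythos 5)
Your proposal is correct and follows essentially the same route as the paper's proof: show that the family is closed under concatenation, so the $l$-fold concatenation is an $[[N,1,N;N-1]]$ code with $N=\prod_j n_j$, and then use the fact that at $k=1$ the EA Singleton, EA Griesmer and linear EA Plotkin bounds all coincide at $d=\frac{n+c+1}{2}=N$. The paper does exactly this, but more tersely: it notes that $[[n,1,n;n-1]]\rhd[[m,1,m;m-1]]$ is $[[mn,1,mn;mn-1]]$, hence \emph{belongs to the same nondegenerate family}, checks saturation of (\ref{eq:singleton}) directly, invokes Theorem~\ref{thm:3in1}, and recurses. The one place you diverge is your claim that the concatenated code is degenerate for $l\ge 2$ because the inner stabilizers supply elements of weight $<N$. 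That criterion is the one for standard stabilizer codes; for EAQECCs, degeneracy is measured against the isotropic subgroup $\mathcal{S}_{I}^{\prime}$ (the degenerate errors are those with $E_i^{\dagger}E_j\in\mathcal{S}_{I}^{\prime}$), and the concatenated code is maximal-entanglement ($c=N-1=N-k$, so $a=0$ and $\mathcal{S}_{I}^{\prime}$ is trivial). By the paper's own definitions it is therefore nondegenerate, Theorem~\ref{thm:3in1} applies verbatim, and your case split is not needed. That said, your defensive handling is harmless and arguably adds value: you verify that the high-distance EA Singleton bound (\ref{eq:qsb_d_geq}) is also saturated at these parameters (it reduces to $k\le\frac{1\cdot(2N-3)}{2N-3}=1$), so the conclusion survives under either reading of degeneracy, and your explicit induction with the telescoping ebit count $c\mapsto c+(n_j-1)P=Pn_j-1$ spells out what the paper compresses into ``belonging to the same family.''
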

\begin{proof}
    The concatenation $[[n,1,n;n-1]] \rhd [[m,1,m;m-1]]$ yields a $[[mn,1,mn;mn-1]]$ CEAQECC belonging to the same family. 
    By Eq. (\ref{eq:singleton}) we find that the $[[mn,1,mn;mn-1]]$ CEAQECC saturates the EA Singleton bound. Noting that $k=1$, by Theorem \ref{thm:3in1} it follows that this CEAQECC also saturates the EA-Griesmer and linear EA Plotkin bounds. The required result is then obtained by recursive concatenation.
\end{proof}
In contrast to the above, any $l$-fold concatenation of codes taken from the family of $[[n,1,n-1;n-1]]$ EA repetition codes for even $n \ge 4$ does not saturate the above-mentioned bounds. 
 
The above self-concatenation of the $[[5,1,3]]$ code shows that component MDS codes need not result in an MDS concatenated code. By contrast, here is an example where concatenation preserves the MDS property. The MDS CEAQECCs $[[5n, 1, 3n; n -1]]$ and $[[4n, 1, 3n; 2n -1]]$ are constructed by choosing an outer MDS $[[n, 1, n; n -1]]$ code, where $n\geq 3$ and is odd, and the inner code given by MDS $[[5, 1, 3]]$ and $[[4, 1, 3; 1]]$ codes. By Theorem \ref{thm:3in1}, these codes also saturate the EA Griesmer and linear EA Plotkin bounds. 
%\textcolor{green}{\textbf{These codes are reported in \cite{fan2022entanglementassisted} for the violation of the EA-Hamming bound.}} 
Note that if we relax the requirement that $n$ is odd in Theorem \ref{th: l-fold}, then the EAQECC can be shown not to exist \cite{lai2013dualities}. 

In principle, degenerate quantum codes may violate the quantum Hamming bound (QHB), and the QHB violation implies degeneracy. Four families of degenerate EAHB-violating CEAQECCs obtained from component EAHB-satisfying codes are given in \cite{fan2022entanglementassisted}. In general, the EAHB-violating property of the component codes by itself does not determine EAHB violation by the concatenated code. In the following result, we construct further degenerate EAHB-violating CEAQECCs, each from a pair of the component codes, where precisely one of the pair violates the EAHB. The proof is analogous to that for the CEAQECCs in \cite{fan2022entanglementassisted} and EAQECCs in \cite{li2014entanglement} that violate the EAHB.

\begin{theorem}
Consider the EAHB-violating (degenerate) $[[8, 1, 5; 1]]$ code $\mathcal{C}_{1}$ concatenated with the EAHB-satisfying $[[n, 1, n; n-1]]$ (odd $n$) code $\mathcal{C}_{a}$ or $[[n, 1, n-1; n-1]]$ (even $n$) code $\mathcal{C}_{b}$. The family of CEAQECCs $\mathcal{C}_a \rhd \mathcal{C}_1$ (odd $n\geq 3$) and $\mathcal{C}_b \rhd \mathcal{C}_1$ (even $n \geq  10$) violates the EAHB, whereas the family of CEAQECCs $\mathcal{C}_1 \rhd \mathcal{C}_a$ and $\mathcal{C}_1 \rhd \mathcal{C}_b$ satisfies the EAHB. 
 \label{th: qhbviolationth4}
 \end{theorem}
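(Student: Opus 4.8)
The plan is to compute the parameters of each concatenated code using Lemmas~\ref{lem:divisible} and~\ref{lem:notdivisible}, and then directly test the EA Hamming bound~(\ref{eq:eahb}) with $k=1$, $t=\lfloor(d-1)/2\rfloor$. First I would record the ebit count and distance for all four families. Since the inner or outer code here has $k=1$, the divisibility condition $k_{\rm inner}\mid n_{\rm outer}$ is automatic, so Lemma~\ref{lem:divisible} applies throughout. For $\mathcal{C}_a\rhd\mathcal{C}_1$ with $\mathcal{C}_a=[[n,1,n;n-1]]$ (outer) and $\mathcal{C}_1=[[8,1,5;1]]$ (inner), the resultant is $[[8n,\,1,\,d\ge 5n;\,c]]$ with $c=(n-1)+1\cdot n=2n-1$; similarly for $\mathcal{C}_b\rhd\mathcal{C}_1$ with $\mathcal{C}_b=[[n,1,n-1;n-1]]$, the resultant is $[[8n,\,1,\,d\ge 5(n-1);\,c]]$ with the same $c=2n-1$. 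For the reversed orders $\mathcal{C}_1\rhd\mathcal{C}_a$ and $\mathcal{C}_1\rhd\mathcal{C}_b$, the outer code is $\mathcal{C}_1=[[8,1,5;1]]$ and the inner is $\mathcal{C}_a$ or $\mathcal{C}_b$, giving resultants $[[8n,1,d\ge 5n;\,c']]$ and $[[8n,1,d\ge 5(n-1);\,c']]$ with $c'=1+(n-1)\cdot 8=8n-7$.

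Next I would plug these into the EAHB. For the violating direction, I want to exhibit $d$ such that $\sum_{i=0}^{t}3^i\binom{8n}{i}>2^{n-k+c}=2^{8n-1-(2n-1)}=2^{6n}$ (using $n-k+c=8n-1-(2n-1)$, wait, $n_{\rm total}-k+c=8n-1+2n-1=10n-2$ — let me recompute: with resultant $[[8n,1,d;2n-1]]$ the exponent is $8n-1+(2n-1)=10n-2$). So for $\mathcal{C}_a\rhd\mathcal{C}_1$ I need $\sum_{i=0}^{t}3^i\binom{8n}{i}>2^{10n-2}$ with $t=\lfloor(5n-1)/2\rfloor$ (taking $d$ at its minimum guaranteed value $5n$, which is conservative for demonstrating violation — actually one should be careful whether larger $d$ helps or hurts; since the bound is about correctable $t$, and the code being degenerate can have effective distance exceeding $5n$, but for the violation claim the safest route is: the code corrects at least $t=\lfloor(5n-1)/2\rfloor$ errors by the distance lower bound, and we check the sum on the left against the right-hand side). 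For $\mathcal{C}_b\rhd\mathcal{C}_1$ the distance is $\ge 5(n-1)$, giving $t\approx (5n-5)/2$, which is why the threshold $n\ge 10$ appears. I would use a standard estimate for the partial binomial-type sum (e.g.\ comparing to the largest term, or the entropy bound $\sum_{i\le t}3^i\binom{m}{i}\le 3^t\binom{m}{t}(1+o(1))$ and a matching lower bound), show that the left side grows like $2^{c_1 n}$ with $c_1>10$ for these $t$ values, and hence dominates $2^{10n-2}$ for $n$ large enough — then verify the small cases ($n=3$ for $\mathcal{C}_a$, $n=10$ for $\mathcal{C}_b$) by direct numerical evaluation to pin down the exact thresholds.

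For the non-violating direction, for $\mathcal{C}_1\rhd\mathcal{C}_a$ the resultant is $[[8n,1,d\ge 5n;8n-7]]$ so $n_{\rm total}-k+c=8n-1+(8n-7)=16n-8$, and I must show $\sum_{i=0}^{t}3^i\binom{8n}{i}\le 2^{16n-8}$ for $t=\lfloor(5n-1)/2\rfloor$; since $t<8n$ and each $\binom{8n}{i}\le 2^{8n}$, crudely $\sum_{i\le t}3^i\binom{8n}{i}\le (t+1)3^{t}2^{8n}$, and $3^{t}\le 3^{5n/2}=2^{(5n/2)\log_2 3}<2^{4n}$, so the sum is at most roughly $2^{12n+o(n)}<2^{16n-8}$ for $n$ not too small — and again the small $n$ cases are checked directly. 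The analogous computation handles $\mathcal{C}_1\rhd\mathcal{C}_b$.

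\textbf{Main obstacle.} The delicate part is not the growth-rate comparison, which is routine, but pinning down the exact range of $n$ for which each statement holds — in particular justifying the precise thresholds ``$n\ge 3$'' for $\mathcal{C}_a\rhd\mathcal{C}_1$ and ``$n\ge 10$'' for $\mathcal{C}_b\rhd\mathcal{C}_1$, and confirming that \emph{every} $n$ in the stated range works (monotonicity of the gap between the two sides in $n$ is plausible but needs a short argument, since both sides are exponential and one must check the ratio is eventually monotone and then verify finitely many cases). A secondary subtlety is the handling of degeneracy: the $[[8,1,5;1]]$ code is degenerate, so its concatenations are degenerate, and the EAHB is a \emph{nondegenerate} bound; I would follow the template of Ref.~\cite{fan2022entanglementassisted} and Ref.~\cite{li2014entanglement} to argue that the concatenated code genuinely corrects all errors of weight $\le t$ (so the left-hand count is legitimate) while still beating $2^{n-k+c}$, which is exactly what ``violating the EAHB'' means in this literature.
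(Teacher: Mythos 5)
Your proposal follows essentially the same route as the paper: compute the concatenated parameters $[[8n,1,\ge 5n;2n-1]]$ and $[[8n,1,\ge 5n;8n-7]]$ via the concatenation lemmas, lower-bound the partial sum $\sum_{i\le t}3^i\binom{8n}{i}$ by its largest term together with the Stirling/entropy estimate to beat $2^{10n-2}$, upper-bound the same sum below $2^{16n-8}$ for the reversed order, and settle the exact thresholds $n\ge 3$ and $n\ge 10$ by checking small cases. This matches the paper's argument (including its reliance on ``one readily determines'' for the threshold values), so no substantive gap to report.
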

 \begin{proof}
     For $n \geq 2$ and $1<i<n$ we have the binomial inequality \cite{macwilliams1978theory,fan2022entanglementassisted}:\\
     \begin{align}
         \frac{2^{nH_2(\beta)}}{\sqrt{8n\beta(1-\beta)}} \leq \binom{n}{i} \leq \frac{2^{nH_2(\beta)}}{\sqrt{2 \pi n\beta(1-\beta)}},
         \label{eq:stirling}
     \end{align}
     where $H_2(x)=-x \log_2x-(1-x) \log_2(1-x)$ is the binary entropy function and $\beta=\frac{i}{n}$. Consider the family of CEAQECCs $[[8n, 1, 5n; 2n-1]]$. Let $n=2m+1$ $(m\geq 1)$. From the above binomial inequality we have
     \begin{subequations}
        \begin{align}
         \sum_{i=0}^{\lfloor(5n-1)/2\rfloor} 3^{i} \binom{8n}{i} & \geq 3^{5m+2}  \binom{16m+8}{5m+2}\label{eq:qhbviolationproofa}\\
         & \geq  \frac{3^{5m+2} 2^{(16m+8)H_2(\beta)}}{\sqrt{8(16m+8)\beta(1-\beta)}} \label{eq:qhbviolationproofb}\\ &\geq 2^{20m+8}=2^{10n-2} \forall m \geq 1,
         \label{eq:qhbviolationproofc}
     \end{align}       
     \label{eq:qhbviolationproof}
     \end{subequations}
     where $\beta=\frac{5m+2}{16m+8}$. Here the right-hand side of Eq. (\ref{eq:qhbviolationproofa}) comes from the last term in the expansion of the left-hand side We obtain the right-hand side in Eq. (\ref{eq:qhbviolationproofb}) by the Stirling-like identity (\ref{eq:stirling}). One readily determines that for sufficiently large $m$ the right-hand side of Eq. (\ref{eq:qhbviolationproofb}) exceeds that of Eq. (\ref{eq:qhbviolationproofc}). In fact, we find that the CEAQECCs $[[8n, 1, 5n; 2n-1]]$ violate the EAHB for odd $n\geq 3$. The violation of EAHB by the family of CEAQECCs $\mathcal{C}_b \rhd \mathcal{C}_1$ ($n\geq 10$) follows in an analogous manner. By contrast, the EAHB is satisfied by the family of CEAQECCs $\mathcal{C}_1 \rhd \mathcal{C}_{a/b}$.

To show that consider the family of CEAQECCs $\mathcal{C}_1 \rhd \mathcal{C}_{a} \equiv [[8n, 1, 5n; 8n-7]]$. As in the above procedure, we find
         \begin{align}
         \sum_{i=0}^{\lfloor(5n-1)/2\rfloor} 3^{i} \binom{8n}{i} &\leq 2^{32m+8}=2^{16n-8} \forall m \geq 1,
         \label{eq:qhbviolationproof2}
     \end{align}
implying satisfaction of the EAHB for odd $n\geq 3$.
\end{proof}

We list below other examples of EAHB-violating CEAQECCs. Consider codes $\mathcal{C}_2 \equiv [[7,1,5;2]]$ and $\mathcal{C}_4 \equiv [[9,1,7;4]]$ \cite{lai2013entanglement}. Then the following CEAQECCs can similarly be shown to violate the EAHB:
\begin{center}
\setlength{\tabcolsep}{25pt}
\begin{tabular}{c  c}
\hline\hline
Code & Condition on $n$\\
\hline
$\mathcal{C}_a \rhd \mathcal{C}_2$ & odd $n \geq 3$ \\
$\mathcal{C}_a \rhd \mathcal{C}_4$ & odd $n \geq 11$ \\ 
$\mathcal{C}_b \rhd \mathcal{C}_2$ &even $n \geq 18$ \\ 
$\mathcal{C}_b \rhd \mathcal{C}_4$ & even $n \geq 52$ \\
\hline\hline
\end{tabular}
\end{center}

With the existence of an $[[n,k,d;c]]$ code, a extended code $[[n+1,k,d;c+1]]$ also exists \cite{lai2013dualities}. This can be used to search for other EAHB-violating CEAQECCs. As one example, codes $\mathcal{C}_a$ and  $\mathcal{C}_b$ of Theorem \ref{th: qhbviolationth4} can be transformed to new codes $\mathcal{C}_{a^{'}}\equiv[[n+1, 1, n; n]]$ (odd $n \geq 3$)  and $\mathcal{C}_{b^{'}}\equiv [[n+1, 1, n-1; n]]$ (even $n \geq 4$), respectively. It may be checked using the method of Theorem \ref{th: qhbviolationth4} that the family of CEAQECCs $\mathcal{C}_{a^{'}} \rhd \mathcal{C}_1$ (odd $n\geq 9$) and $\mathcal{C}_{b^{'}} \rhd \mathcal{C}_1$ (even $n \geq  16$) also violate the EAHB. Note that other methods of code modification can be used \cite{luo2023constructing} to obtain new codes in the search for the EAHB violation. 

As an illustration of the order-dependent violation of the EAHB by the resultant of concatenating two component codes, consider CEAQECCs $\mathcal{C}_a \rhd \mathcal{C}_1$ and $\mathcal{C}_1 \rhd \mathcal{C}_a$. Figure \ref{fig:he} depicts the Hamming efficiency $\varphi$ for the two concatenated families over a range of $n$ values, showing that $\varphi>1$ for the former and $\varphi<1$ for the latter. 
\begin{figure}[htp]
    \centering
    \includegraphics[width=8cm]{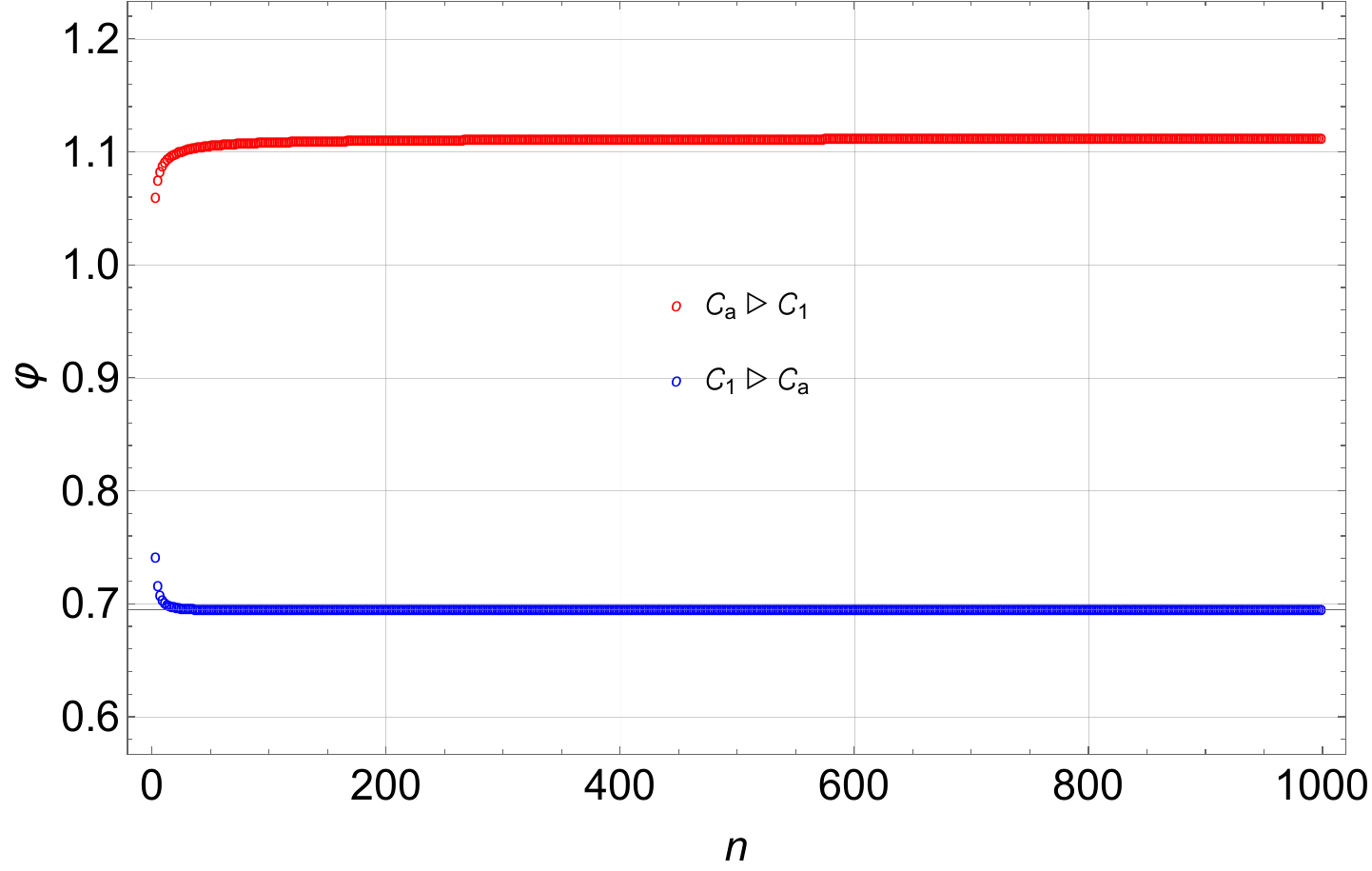}
    \caption{Hamming efficiency plot for the two families of CEAQECCs $\mathcal{C}_{a} \rhd \mathcal{C}_{1}$ and $\mathcal{C}_{1} \rhd \mathcal{C}_{a}$, where $\mathcal{C}_{1} \equiv [[8,1,5;1]]$ and $\mathcal{C}_{a} \equiv [[n, 1, n; n-1]]$ (odd $n$). The former concatenated code violates the EA Hamming bound, while the latter concatenated code does not.}
    \label{fig:he}
\end{figure}

\section{Conclusion}\label{sec-conclusion}

In this work we have investigated the scope and properties of various bounds on EAQECCs and their concatenation. We have derived the general expression of shared entanglement for concatenated codes, encompassing both the case where the outer code length is divisible by the inner code rate and the case where it is not. We show that the concatenating order plays a role in determining in general the properties of the resultant codes, such as the shared entanglement or error-probability threshold. Even so, we have constructed families of pairs of EAQECCs such that concatenating the partner codes within a pair results in CEAQECCs with order-independent ebits. While it is known that maximal-entanglement EAQECCs obtained from a classical quaternary Griesmer or Plotkin code saturate the EA Griesmer or linear EA Plotkin bound, respectively, here we have shown that nonmaximal-entanglement EAQECCs also can saturate these bounds, provided a corresponding condition on code distance is met. We have illustrated these results by presenting several families of such nonmaximal-entanglement EAQECCs. We have also computed an upper bound on the number of correctable errors of EAQECCs satisfying $d \ge q^2$, which may be considered as the EA extension of the Griesmer-Rains bound. We have further constructed families of pairs of EAQECCs such that concatenating the partner codes within a pair results in CEAQECCs that violate the EAHB in an order-dependent way.

Finally, we indicate a number of open questions suggested by our work. In \cite{grassl2021entanglementassisted}, codes of the type $[[n,k,d;k]]$ such that $d\ge\frac{n}{2}+1$ were shown to violate the EA Singleton bound (\ref{eq:singleton}). As the EA Griesmer bound is a tighter bound, these counterexamples will also violate this bound. However, these codes satisfy the higher-distance EA Singleton bound (\ref{eq:qsb_d_geq}), derived in \cite{grassl2022entropic}. Thus an interesting question here is what, if any, would be a corresponding refinement of the EA Griesmer bound as applicable to small or large code distance, with the subtlety that this distance relation can depend on the alphabet size $q$.

Furthermore, in Theorems \ref{thm:featuregriesmer} and \ref{thm:featureplotkin} the saturation of the EA Griesmer or linear EA Plotkin bound is related to a constraint on the alphabet size $q$. Specifically, in the former case, we obtained the condition $d \le 4^{\kappa}$, or more generally $d \le q^{2\kappa}$, while in the latter case, we obtained $2d(1-2^{-a})=3a4^{\kappa-1}$, or more generally $2d(1-q^{-a})=a(q^2-1)q^{2(\kappa-1)}$. For standard CSS codes it has been proved that for $q\ge 5$ they must satisfy the quantum Hamming bound, and there is a similar condition on alphabet size for non-CSS quantum codes that satisfy the quantum Hamming bound \cite{sarvepalli2010}. Thus our second question concerns the condition on alphabet size $q$ for an $[[n,k,d;c]]_{q}$ EAQECC to satisfy the EAHB. In particular, this condition should explain the order-dependent violation of the EAHB demonstrated in Theorem \ref{th: qhbviolationth4}.

Finally, we recollect that EA subsystem codes combine entanglement assistance and passive error correction \cite{hsieh2007general}. Degenerate EAQECCs and CEAQECCs can violate the EAHB, as shown above in Sec. \ref{sec-violation_or_saturation_ceaqeccs} and in Refs. \cite{li2014entanglement, fan2022entanglementassisted}. Further, it was shown in Ref. \cite{klappenecker2007subsystem} that some quantum subsystem codes exist that violate the quantum subsystem Hamming bound. Thus, a third question is whether the EA subsystem Hamming bound can be violated by EA subsystem codes. For the above mentioned reasons, it is reasonable to expect that the answer to this question is in the affirmative.

\acknowledgments
The authors thank M. Grassl for helpful comments.
N.R.D. and S.D. acknowledge financial support from the Department of Science and Technology, Ministry of Science and Technology, India, through the INSPIRE fellowship and the University Grants Commission India through the NET fellowship, respectively. R.S. acknowledges partial support from the Science and Engineering Research Board through Grant No. CRG/2022/008345. N.R.D. is sincerely grateful to the Poornaprajna Institute of Scientific Research for its hospitality and conducive environment during his academic visit.

\appendix

\bibliography{reference}
\end{document}